%% file: main_revised.tex
\documentclass[lettersize,journal]{IEEEtran}
\usepackage{amsmath,amsfonts,amssymb}
\usepackage{algorithmic}
\usepackage{algorithm}
\usepackage{array}
\usepackage{textcomp}
\usepackage{stfloats}
\usepackage{url}
\usepackage{multirow}
\usepackage{tabularx,booktabs} %Self added 
\usepackage{verbatim}
\usepackage{graphicx}
\usepackage{cite}
\usepackage{makecell}
\usepackage{url}
\usepackage{relsize} %Self added
\usepackage{color}
\usepackage{graphicx}
\usepackage{float}
\usepackage{subfigure}
\usepackage{enumitem}
\usepackage{bbm}
\usepackage{amsmath}
\usepackage{stfloats}
\usepackage{amsmath,amssymb}
\usepackage{amsthm}  % 提供 lemma / proof 等环境

\newtheorem{lemma}{Lemma}
\makeatletter
\renewcommand{\maketag@@@}[1]{\hbox{\m@th\normalsize\normalfont#1}}%
\makeatother
\usepackage{hyperref}  %hyperref still needs to be put at the end!

\begin{document}

\title{Movable-Antenna Index Modulation (MA-IM): System Framework and Performance Analysis} 

\author{Bang Huang {\em Member, IEEE},~ Shunyuan Shang, and
Mohamed-Slim Alouini,~\IEEEmembership{Fellow,~IEEE}
\thanks{The authors are with Computer, Electrical and Mathematical Sciences
and Engineering (CEMSE) Division, Department of Electrical and Computer
Engineering, King Abdullah University of Science and Technology (KAUST), Thuwal 23955-6900, Saudi Arabia. (e-mail: bang.huang@kaust.edu.sa;shunyuan.shang@kaust.edu.sa slim.alouini@kaust.edu.sa) (Corresponding author: Shunyuan Shang). 
}
% \thanks{This work was supported in part by the National Nature Science Foundation of China under Grant 62201347, Shanghai Sailing Program under Grant 22YF1428400, and Research Council of Finland (Grants 348515 (UPRISING) and 369116 (6G Flagship)). (\textit{Corresponding author: Yijie Mao})}  

% \thanks{K. Lin and M.-S. Alouini are with the Division of Computer, Electrical and Mathematical Sciences and Engineering, King Abdullah University of Science and Technology, Saudi Arabia (E-mail: kaiqiang.lin@kaust.edu.sa; slim.alouini@kaust.edu.sa).}

% \thanks{Y. Mao is with the School of Information Science and Technology, ShanghaiTech University, Shanghai, China (E-mail: maoyj@shanghaitech.edu.cn).}

% \thanks{O. L. A. López is with the Centre for Wireless Communications, University of Oulu, Finland (Email: onel.alcarazlopez@oulu.fi).}
}
\maketitle
\begin{abstract}
This paper proposes a movable-antenna-based index modulation
(MA-IM) framework that exploits the spatial mobility of a single
reconfigurable antenna to create additional information-bearing
dimensions for next-generation wireless systems.
By discretizing the continuous movable region into a dense set of
candidate sampling points and selecting representative anchors for
indexing, the proposed framework converts spatial degrees of freedom
into a practical modulation resource.
Building on this framework, we develop a family of anchor-selection
strategies with different levels of channel awareness, including
geometry-based, SNR-based, max--min channel-domain, and joint
constellation-aware designs.
For the resulting MA-IM schemes, joint maximum-likelihood (ML)
detectors are derived, along with a low-complexity two-stage detector,
and unified analytical upper bounds on the average bit error
probability (ABEP) are established based on the joint
index--modulation constellation.
The results reveal that directly indexing all sampling points is
generally unreliable, highlighting the necessity of anchor
optimization.
The performance of MA-IM is shown to depend on key system parameters,
including channel richness, spatial correlation, the number of index
states, and the modulation order.
In particular, increasing the number of index states and increasing
the QAM order affect MA-IM in fundamentally different ways, even under
the same transmission rate.
Among the proposed schemes, the joint constellation-aware anchor
design achieves the best error performance, demonstrating that
optimizing channel-domain separation alone is insufficient and that
effective MA-IM design must account for the geometry of the joint
signal constellation.
Simulation results further show that, with properly designed anchors,
MA-IM can approach or even outperform same-spectral-efficiency QAM
baselines.
\end{abstract}

\begin{IEEEkeywords}
Average bit error probability (ABEP), movable antenna (MA), maximum-
likelihood (ML), index modulation (IM),  spatial continuity
\end{IEEEkeywords}
\section{Introduction}
\IEEEPARstart{I}{n}  recent years, mobile communication systems have been rapidly evolving
toward ultra-high capacity, ultra-low latency, and ultra-high energy 
efficiency \cite{Saad2020AVisionof}. However, traditional multi-antenna technologies such as 
multiple-input multiple-output (MIMO) and massive MIMO \cite{Lu2014AnOverviewof} are increasingly limited by hardware scalability, 
RF-chain cost, power consumption, and physical deployment constraints. 
Moreover, spectrum scarcity makes it difficult to meet future capacity 
demands merely by increasing modulation order or bandwidth. These trends 
call for new architectures that can enhance spectral and energy 
efficiency without additional hardware or power overhead \cite{Jaradat2025ASurveyof,Pereira2022AnOverview,LiMin2025Sparse}.

To address the aforementioned challenges, a new class of reconfigurable
antenna technologies has attracted significant interest. Reconfigurable
intelligent surfaces (RIS) 
\cite{LiuLiu2021ReconfigurableIntelligentSurfaces}
enable programmable manipulation of the propagation environment through
large arrays of nearly passive elements, whereas reconfigurable array
antennas achieve flexible beamforming via switching networks or tunable
components \cite{liu2025reconfigurable}. In parallel, fluid antennas (FA) 
\cite{WongShojaeifard2021FluidAntenna,Wong2020PerformanceLimits,WongTong2022FluidAntenna}
allow rapid relocation of the active port within a compact enclosure,
effectively exploiting small-scale spatial channel fluctuations with
extremely low hardware overhead.
A more radical form of spatial reconfigurability is provided by movable
antennas (MA) 
\cite{ZhuMa2024MovableAntennas,ZhuMa2024ModelingandPerformance,ZhuMa2024MovableAntennaEnhanced,shao20246d},
which physically reposition a single RF element within a bounded
2D/3D/6D region.  
Unlike conventional fixed-geometry antennas (FGA), both FA and MA convert the
spatial position itself into a controllable communication resource,
enabling the system to probe or select favorable channel realizations
without increasing the number of RF chains, transmit power, or antenna
aperture.  
This spatial mobility provides several notable advantages:
(i) access to rich channel diversity at minimal hardware cost;
(ii) position-domain scheduling for improved signal-to-noise ratio (SNR) or lower correlation;
and (iii) channel-shaping capabilities comparable to MIMO or RIS but
with substantially lower energy consumption.  
Consequently, FA and MA, alongside RIS and MIMO, are emerging as key
enablers of 6G reconfigurable spatial communications, with promising
applications in integrated sensing and communication (ISAC), cooperative networks, and green wireless systems
\cite{Zou2024ShiftingtheISAC,New2025FluidAntennaSystems,ChenZhao2025AntennaPosition,CaoJiang2025JointAntennaPosition}.

Meanwhile, index modulation (IM) \cite{basar2016index} has emerged as an effective technique for improving spectral efficiency (SE) and reducing hardware cost by conveying information through an additional {index dimension}. The key idea of IM is to activate or select certain entities, such as antennas, subcarriers, time slots, carrier frequencies, or channel states \cite{Aydın2016ANovelSM,mao2018novelindexmodulation,ZhangZou2024RISAidedIndex}, thereby enabling the transmission of extra bits beyond conventional signal constellations. As a result, IM can enhance SE without increasing the modulation order, making it attractive for green communication systems with strict hardware and energy constraints.

IM has been widely integrated into fixed-antenna architectures. A representative example is spatial modulation (SM) \cite{mesleh2008spatial}, which conveys index bits via antenna activation. Numerous extensions, including generalized SM (GSM) \cite{DiHaas2014SpatialModulation}, quadrature SM (QSM) \cite{GuoZhang2019SignalShaping}, enhanced SM (ESM) \cite{Althunibat2018EnhancingSpatial}, and precoding SM (PSM) \cite{HeWang2018SpatialModulation}, further improve SE and reliability through richer activation patterns and advanced processing techniques. 
Beyond conventional MIMO systems, IM has also been combined with emerging architectures. For instance, frequency diverse arrays (FDA) \cite{HuangOrlando2025GLRTBased} introduce additional index dimensions across frequency and spatial domains \cite{HuangXu2025GeneralizedCode,jian2023mimo,Jian2024FDAMIMO,HuangShan2020IndexModulation}. In mmWave systems, spatial scattering modulation (SSM) \cite{ZhuChen2023QuadratureSpatial} exploits sparse propagation paths for index transmission, while its extension, polarized SSM (PSSM) \cite{LiKim2019PolarizedSpatial}, further leverages polarization diversity.

Despite these advances, most existing IM schemes rely on fixed-position antenna architectures (FPA-IM), whose index space is fundamentally constrained by the physical antenna layout, limiting scalability in the spatial domain. To overcome this limitation, recent works have explored IM with reconfigurable antenna structures such as RIS and FA \cite{LiBai2024RISBased}. In particular, FA-enabled IM schemes have been proposed for MIMO systems \cite{zhu2024index}, grouping-based designs \cite{guo2025fluid}, and position-domain index modulation \cite{yang2024position}, demonstrating improved SE and robustness. Furthermore, continuous-trajectory FA (CT-FA) based IM \cite{liu2025index} exploits channel variations induced by antenna motion to enable covert communication.

Despite the demonstrated potential of MA in enhancing 
spatial degrees of freedom and improving channel reconfigurability, to 
the best of the authors' knowledge, no existing work has systematically 
integrated the spatial mobility of MA with the IM
framework. In particular, the literature lacks a unified study on how to 
construct index states from the continuous two-dimensional movable region 
of an MA, nor does it provide corresponding modulation schemes, detection 
algorithms, or performance analyses tailored for such systems. This clear 
research gap presents a timely opportunity to explore the synergy between 
MA and IM.

Motivated by this observation, combining the spatial mobility of MA with the indexing principle of IM opens up an entirely new modulation dimension, where the physical position of the antenna itself serves as a controllable information-bearing resource. Such a hybrid architecture can enhance SE  without requiring additional RF chains or expanded bandwidth, while exploiting the rich channel diversity induced by MA movement.
To fully leverage this new modulation dimension, it is essential to carefully reconcile the discrete nature of IM with the intrinsically continuous spatial mobility of MA. In this regard, although IM introduces discrete index states, the proposed MA-IM framework does not quantize the antenna motion itself at a conceptual level. Instead, the intrinsic continuous spatial degrees of freedom of MA are preserved through dense spatial sampling and channel-domain design, which enables MA-IM to retain the fundamental characteristics of MA systems.
However, the integration of continuous spatial mobility with discrete index modulation gives rise to two fundamental challenges.

\begin{itemize}

\item \textbf{How to discretize the continuous movable region into a finite set of sampling points?}  
The MA operates over a continuous spatial region, whereas IM inherently relies on discrete states. Therefore, it is necessary to construct a sufficiently dense set of sampling points to approximate the continuous spatial domain, while preserving the underlying channel variations induced by antenna movement.

\item \textbf{How to select representative sampling points for reliable index modulation?}  
Although dense spatial sampling captures the continuous nature of MA, directly indexing all sampling points is generally unreliable due to strong spatial correlation, which leads to poor distinguishability among index states. Therefore, it is essential to select a subset of representative sampling points that maximizes channel separability and enables reliable index detection.

\end{itemize}

To overcome these challenges, this paper presents a novel MA-based
index modulation framework and makes the following key contributions:

\begin{itemize}

\item We develop a practical MA-IM system model that converts the
continuous movable region of the antenna into a discrete set of
candidate transmission positions through dense spatial sampling.
This discretization reveals a large pool of channel realizations
induced by the MA mobility and provides a foundation for constructing
index-modulated transmission states.

\item We introduce the concept of {anchor selection} for
MA systems. Unlike conventional FA-IM schemes that assume
a predefined set of antenna ports and only perform indexing among
them, the proposed framework first selects a subset of representative
ports (anchors) from a dense set of candidate MA sampling locations.
This additional design freedom allows the system to exploit the
underlying channel geometry and significantly improve the
separability among index states.

\item Based on the proposed anchor-selection framework, we develop a
family of MA-IM schemes with progressively increasing levels of
channel awareness, including random activation, geometry-based
selection, SNR-based representative selection, cell-constrained
max--min design, global max--min optimization, and joint
constellation-distance-aware anchor selection.
These schemes form a unified design hierarchy that systematically
enhances the separability of index states in the channel domain
and, ultimately, in the joint index--modulation constellation.
This framework enables a comprehensive performance comparison
across geometry-driven, channel-aware, and constellation-aware
port selection strategies.

\item For the resulting MA-IM transmission schemes, we derive joint
maximum-likelihood (ML) detectors that simultaneously estimate the
active index state and the transmitted QAM symbol. In addition, tight
union-bound expressions of the average bit error probability (ABEP)
are obtained for representative schemes, enabling analytical
performance characterization under various system parameters.

\item To reduce the computational burden of exhaustive ML detection,
we further design a low-complexity two-stage detector that achieves a
favorable performance–complexity tradeoff and makes MA-IM suitable for
practical implementation.

\item Extensive Monte Carlo simulations validate the accuracy of the
derived ABEP expressions and demonstrate that the proposed
channel-aware anchor selection strategies significantly improve the
reliability of MA-IM systems. The results also provide useful design
guidelines on sampling density, index alphabet size, and modulation
order for MA-enabled communication systems.

\end{itemize}

Overall, the results demonstrate that MA-IM can effectively exploit 
spatial reconfigurability to enhance modulation dimensionality and system 
performance without increasing hardware complexity. The proposed 
framework offers new insights and practical methodologies for the design 
of next-generation green, SE, and reconfigurable 
wireless communication systems.

The remainder of this paper is organized as follows. 
Section~\ref{sec2} establishes the overall MA-based transmission
framework. Besides, a family of anchor-selection
strategies for constructing index-modulated transmission states are given in Section~\ref{sec22}.
Section~\ref{sec3} presents the corresponding joint ML detectors for all
schemes and develops a low-complexity two-stage detection algorithm. 
Section~\ref{sec4} provides a comprehensive performance analysis,
including the achievable throughput and the derived ABEP upper bounds.
Section~\ref{sec:simulation} reports extensive simulation results that validate
the theoretical findings. 
Finally, Section~\ref{sec6} concludes the paper and outlines several
promising future research directions.

\section{System Model and Movable Antenna Discretization}
\label{sec2}
The MA-IM system investigated in this paper is depicted in Fig. \ref{fig1}. At the transmitter, a MA is employed, whereas the user adopts conventional FPA such as phase array (PA). The MA module is connected to the RF chain through flexible coaxial cables, thereby enabling real-time and continuous adjustment of its spatial position.
Let the position of an MA be represented by the Cartesian coordinate vector $\boldsymbol{t} = [x, y]^T \in \mathcal{C},$
where $\mathcal{C}$ denotes the designated two-dimensional (2D) region within which the transmit MA is allowed to move freely without mechanical constraints.
Throughout this work, we focus on narrowband and quasi-static channel conditions. We further assume that the MA can relocate with sufficiently high agility such that the positioning overhead is negligible relative to the extended channel coherence time, thus ensuring that MA reconfiguration does not disrupt communication \cite{mazhu2023mimocapacity}. 
% \begin{figure}[H]
%     \centering
%     \includegraphics[width=1\linewidth]{figure1.pdf}
%     \caption{Movable antenna architecture}
%     \label{fig:placeholder}
% \end{figure}
\begin{figure*}[htp]
	\centering
		{\includegraphics[width=0.68\textwidth]{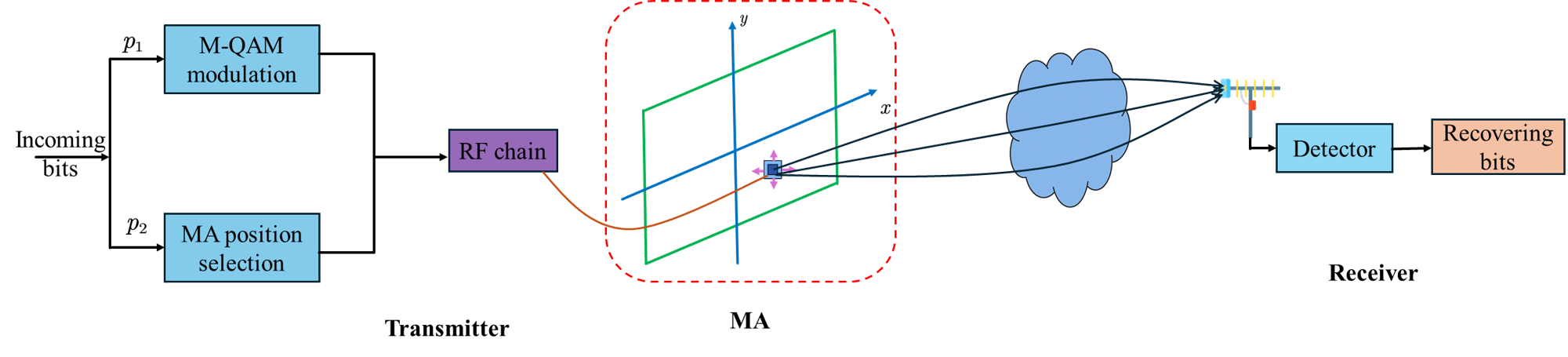}}	
		\caption{System model of the proposed MA-IM framework.}
		\label{fig1}	
\end{figure*}

\subsection{Channel model}
The channel vector between the MA-IM transmitter and the user is determined by both the propagation environment and the positions of the MA. Since the movement range of the antennas is negligible compared with the signal propagation distance, the far-field assumption is adopted. Based on this assumption, the MA–PA channels can be modeled using the plane-wave approximation \cite{zhuma2023modelingperformance}. Consequently, for each propagation path, the complex path coefficients at different MA locations exhibit different phases. This property allows the spatial variation of the channel to be fully characterized by the phase term induced by the propagation distance difference.

\begin{figure}[htp]
    \centering
    \includegraphics[width=0.7\linewidth]{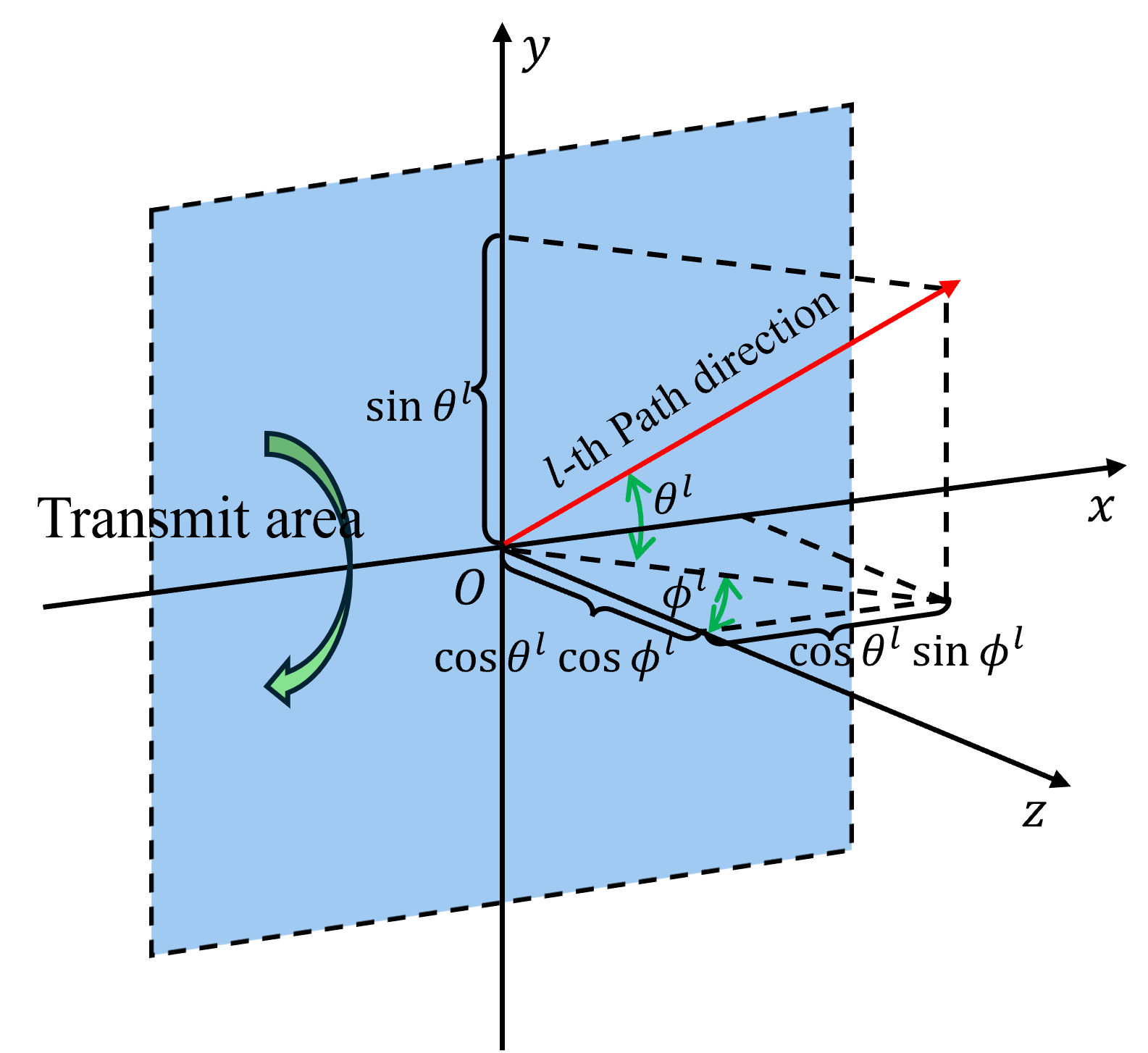}
    \caption{Illustration of the spatial angle domain corresponding to the transmit region}
    \label{fig:placeholder}
\end{figure}
Assume that the number of propagation paths is $L$.
As illustrated in Fig.~2, the $l$-th path is characterized by its
elevation and azimuth angles of departure (AoD), denoted by
$\theta^l$ and $\phi^l$, respectively. The corresponding 3D unit
propagation direction can be defined according to the geometric
coordinate system in Fig.~2. Since MA is restricted
to a 2D transmit region, only the projection of the propagation
direction onto this region is relevant for characterizing the phase
variation across different MA positions.

Accordingly, the effective 2D wavefront vector of the $l$-th path on
the transmit region is written as $
\boldsymbol{n}^l=\left[ \sin \theta ^l\cos \phi ^l,\cos \theta ^l \right] ^T
$.

Let the origin of the MA region be
$\boldsymbol{o}=[0,0]^T$, and let
$\boldsymbol{t}=[x,y]^T$ denote an arbitrary transmit position on the
2D region. Under the plane-wave assumption, the path-length
difference of the $l$-th propagation path between $\boldsymbol{t}$
and $\boldsymbol{o}$ is given by the projection of the displacement
$\boldsymbol{t}-\boldsymbol{o}$ onto the effective wavefront vector,
i.e.,
\begin{equation}
\rho^l(\boldsymbol{t})
=
(\boldsymbol{n}^l)^T(\boldsymbol{t}-\boldsymbol{o})
=
x\sin\theta^l\cos\phi^l+y\cos\theta^l.
\end{equation}

Thus, the field-response vector characterizing the transmit MA is defined as \cite{jiangZhang2025movableantenna}
\begin{equation}
    \label{}
\boldsymbol{f}\left( \boldsymbol{t} \right) =\left[ e^{j\frac{2\pi}{\lambda}\rho ^1\left( \boldsymbol{t} \right)},e^{j\frac{2\pi}{\lambda}\rho ^2\left( \boldsymbol{t} \right)},\cdots ,e^{j\frac{2\pi}{\lambda}\rho ^L\left( \boldsymbol{t} \right)} \right] ^T\in \mathbb{C} ^L,
\end{equation}
with the symbol $\lambda$ denoting the wavelength.

\subsection{Port discretization}
To facilitate theoretical analysis and performance evaluation, this paper introduces a sampled MA model. Instead of allowing the MA to move continuously over the region $\mathcal{C}$, we approximate this continuous area using a sufficiently dense set of discrete sampling points. Formally, the continuous movement space is discretized into a finite set $
\mathcal{C} _d=\left\{ \mathbf{t}_1,\mathbf{t}_2,\dots ,\mathbf{t}_Q \right\} 
$ with $Q$ being the total sampling number, and the MA is assumed to move among these discrete points. This discretization converts the original continuous-position selection problem into a finite-state switching problem, significantly simplifying the modeling and analysis.

As long as the sampling points are dense enough, the discrete points can accurately capture the spatial phase variations and channel fluctuations over the continuous region. Therefore, the sampled points serve as a high-fidelity approximation of the continuous MA movement space, without sacrificing the spatial degrees of freedom or channel diversity gains inherent to MA.

Moreover, this sampled MA abstraction is also aligned with practical hardware implementations. Whether the MA is driven by mechanical actuators (e.g., linear rails or stepper motors) or realized through micro-electro-mechanical systems (MEMS)/liquid-metal mechanisms as in FA, the physically reachable antenna positions are inherently discrete with finite spatial resolution. Furthermore, under the sampling-theoretic constraints, Lemma \ref{lem1} provides the maximum allowable sampling interval for discretizing the movable-antenna region, thereby offering a theoretical guideline for constructing a finite set of sampling points from the original continuous movement space. 

\begin{lemma}
\label{lem1}
Consider a rich-scattering environment where the spatial correlation between 
two antenna positions separated by distance $\Delta r$ is approximated by \cite{jakes1994microwave,New2024AnInformation}
\begin{equation}
\label{eq3}
    \rho(\Delta r)=J_{0}\!\left(\frac{2\pi \Delta r}{\lambda}\right),
\end{equation}
where $J_0(\cdot)$ denotes the zeroth-order Bessel function.
Given a target correlation level $\rho_{\mathrm{tar}}\in[0,1)$, 
the maximum spacing $\Delta r_{\max}$ that guarantees 
$\rho(\Delta r)\ge \rho_{\mathrm{tar}}$ is
\begin{equation}
    \Delta r_{\max}\approx \frac{\lambda}{\pi}\sqrt{1-\rho_{\mathrm{tar}}}.
\end{equation}

\end{lemma}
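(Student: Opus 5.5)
The approach is a small-argument Taylor expansion of the Bessel function, followed by inversion of the resulting quadratic inequality. Write $x = 2\pi\Delta r/\lambda$. The power series $J_0(x)=\sum_{k\ge 0}\frac{(-1)^k}{(k!)^2}\left(\frac{x}{2}\right)^{2k}$ gives
\begin{equation*}
J_0(x) = 1 - \frac{x^2}{4} + O(x^4).
\end{equation*}
Substituting $x$ yields
\begin{equation*}
\rho(\Delta r)\approx 1-\frac{\pi^2\Delta r^2}{\lambda^2}.
\end{equation*}

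First, I would justify that only the regime of small $\Delta r$ is relevant. $J_0$ equals $1$ at the origin and is strictly decreasing on $[0,j_{1,1}]$, since $J_0'=-J_1<0$ there, with $j_{1,1}\approx 3.83$ the first zero of $J_1$. So on this first lobe the set $\{\Delta r:\rho(\Delta r)\ge\rho_{\mathrm{tar}}\}$ is an interval $[0,\Delta r_{\max}]$. Here $\Delta r_{\max}$ is the unique solution of $J_0(2\pi\Delta r/\lambda)=\rho_{\mathrm{tar}}$ with $2\pi\Delta r/\lambda$ in the monotone region; for $\rho_{\mathrm{tar}}\ge 0$ this lies before the first zero $j_{0,1}\approx 2.405$. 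Monotonicity is what makes ``maximum spacing guaranteeing $\rho\ge\rho_{\mathrm{tar}}$'' well defined. I would explicitly restrict to this main lobe, since later oscillation lobes have $|J_0|<0.41$ and sampling is meant to be dense.

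Second, I would replace $J_0$ by its quadratic truncation and solve
\begin{equation*}
1-\frac{\pi^2\Delta r^2}{\lambda^2}\ge\rho_{\mathrm{tar}}.
\end{equation*}
This gives $\Delta r\le \frac{\lambda}{\pi}\sqrt{1-\rho_{\mathrm{tar}}}$, which is the claimed expression. The alternating series with decreasing terms for $x<2$ shows that $1-x^2/4\le J_0(x)\le 1-x^2/4+x^4/64$. Consequently, the quadratic approximation underestimates $J_0$, so the approximate $\Delta r_{\max}$ is a conservative (lower) estimate of the true threshold and still guarantees $\rho\ge\rho_{\mathrm{tar}}$. This is a useful side remark that makes the ``$\approx$'' one-sided and safe.

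The main obstacle is quantifying the approximation error, because the quartic term is not negligible when $\rho_{\mathrm{tar}}$ is near $0$: there $x\approx 2$, close to $j_{0,1}$. My plan is to state the result as asymptotically exact as $\rho_{\mathrm{tar}}\to 1$, with relative error $O(1-\rho_{\mathrm{tar}})$ obtained by a perturbation of the quadratic root using the $x^4/64$ term. For the full range $[0,1)$, I would rely on the conservative-bound argument from the previous step rather than claim tightness.
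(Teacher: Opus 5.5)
Your proposal is correct and takes the same route as the paper: truncate $J_0(x)\approx 1-x^2/4$ with $x=2\pi\Delta r/\lambda$, then invert the resulting quadratic inequality to obtain $\Delta r\le \frac{\lambda}{\pi}\sqrt{1-\rho_{\mathrm{tar}}}$. Your additional steps go beyond the paper's proof, which only states the approximation, and they are sound: restricting to the monotone main lobe via $J_0'=-J_1$ is valid, and so is the alternating-series bound $J_0(x)\ge 1-x^2/4$, which makes the approximate value a rigorous conservative spacing (this bound actually holds for all real $x$, not only $x<2$, so the endpoint $x=2$ at $\rho_{\mathrm{tar}}=0$ is also covered).
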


\begin{proof}
Using the approximation $J_0(x)\approx 1-x^2/4$ for small $x$, we obtain
\begin{equation}
    \rho(\Delta r)\approx 
    1-\frac{1}{4}\!\left(\frac{2\pi\Delta r}{\lambda}\right)^{2}.
\end{equation}
Imposing $\rho(\Delta r)\ge \rho_{\mathrm{tar}}$ leads to
\begin{equation}
    1-\frac{1}{4}\!\left(\frac{2\pi\Delta r}{\lambda}\right)^{2}
    \ge \rho_{\mathrm{tar}},
\end{equation}
which gives
\begin{equation}
    \Delta r_{\max}\approx 
    \frac{\lambda}{\pi}\sqrt{\,1-\rho_{\mathrm{tar}}\,}.
\end{equation}

% For a 2D square grid, the largest Euclidean spacing occurs along the diagonal,
% i.e., $\sqrt{2}\,dx$. Requiring $\sqrt{2}\,dx\le\Delta r_{\max}$ yields
% \begin{equation}
%     dx \le \frac{\lambda}{\pi\sqrt{2}}\sqrt{1-\rho_{\mathrm{tar}}}.
% \end{equation}
This completes the proof.
\end{proof}

Compared with the Nyquist sampling limit $\lambda/2$\footnote{
As indicated by \eqref{eq3}, when the sampling interval is chosen as $\lambda/2$, the channel responses at two adjacent sampling points become almost uncorrelated. This, however, is not the intended design principle of this work. Our goal is to ensure that any two neighboring sampling points remain highly correlated even after discretization.
}, the spacing $\Delta r_{\max}$ derived in Lemma 1 based on channel-similarity constraints is strictly smaller. Consequently, the admissible sampling intervals for discretizing the movable-antenna region along the $x$– and $y$–directions can be written as:
\begin{equation}
    \label{eq8}
dx\in \left[ 0,\Delta r_{\max} \right],\quad dy\in \left[ 0,\Delta r_{\max} \right].
\end{equation}
% $
% dx\in \left( 0,\min \left\{ \Delta r_{\max},\frac{\lambda}{2} \right\} \right] ,dy\in \left( 0,\min \left\{ \Delta r_{\max},\frac{\lambda}{2} \right\} \right] 
% $.

(8) indicates that as the sampling interval approaches zero, the movable-antenna region effectively behaves as a continuous domain. However, to ensure the target channel similarity $\rho_{\mathrm{tar}}$, the sampling interval must be bounded above by $\Delta r_{\max}$. In the subsequent  section, to simplify the analysis and highlight the key performance trends, we set $d_x$ and $d_y$ to their maximum allowable values.

\begin{figure}[htp]
	\centering
		{\includegraphics[width=0.48\textwidth]{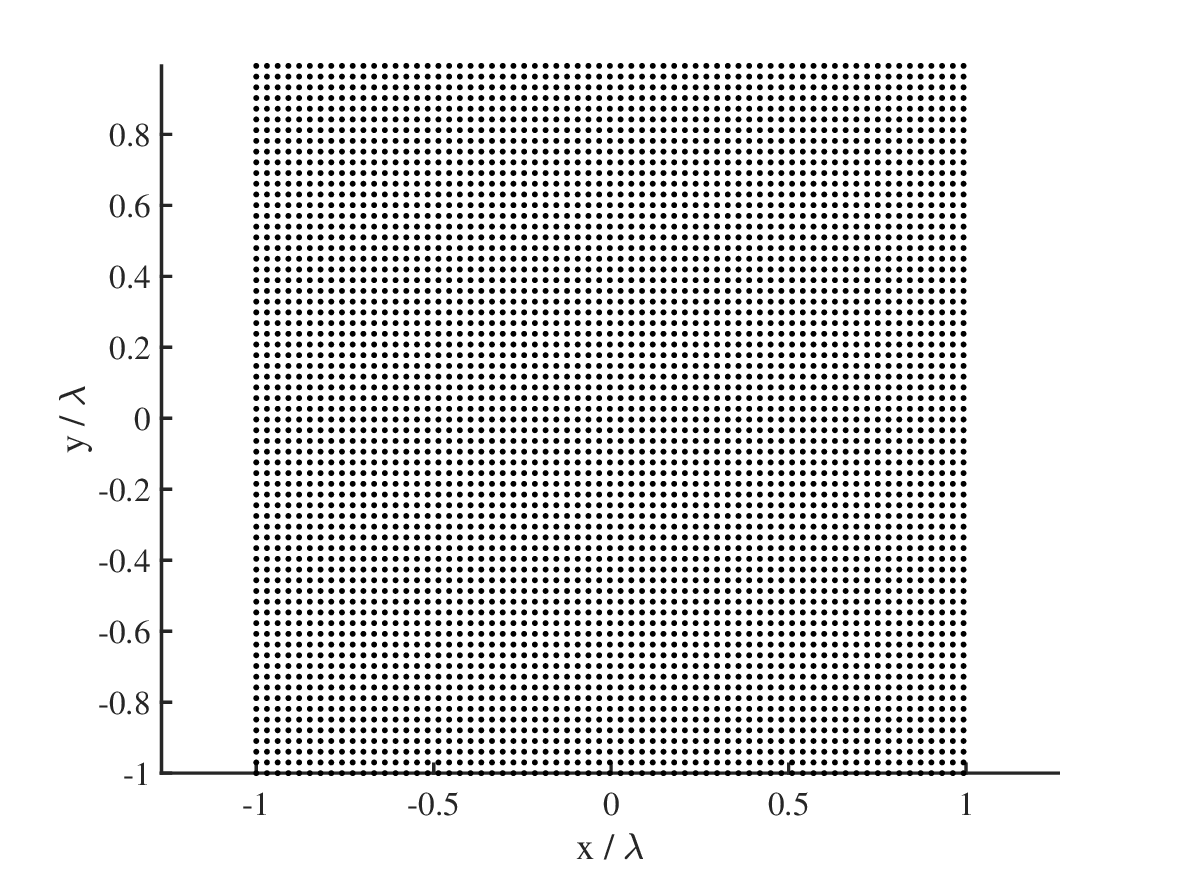}}
		% \subfigure[]{\includegraphics[width=0.35\textwidth]{results/FDAvsMIMO_L2_K16.eps}}
		% \subfigure[]{\includegraphics[width=0.35\textwidth]{results/FDAvsMIMO_L1_K32.eps}}		
		\caption{Illustration of the sampling layout in the MA region under the target channel similarity constraint $\rho_{\mathrm{tar}} = 0.9$.}
		\label{fig4}	
\end{figure}

Fig. \ref{fig4} shows all sampling points within the MA’s movable region under the design target $\rho_{\mathrm{tar}} = 0.9$. 
The channel similarity between any two positions $\boldsymbol{t}_m$ and $\boldsymbol{t}_n$ is defined as
\begin{equation}
    \rho_h = \boldsymbol{f}(\boldsymbol{t}_m)^{\dagger}\boldsymbol{f}(\boldsymbol{t}_n)/L .
\end{equation}
By computing the similarity for all adjacent sampling-point pairs in Fig. \ref{fig4} and taking the average, an averaged similarity of $\rho_h = 0.8945$ is obtained, indicating that the corresponding channels are highly correlated.
This observation further confirms the validity of the proposed lemma.

\section{Index modulation scheme}
\label{sec22}
Building upon the above discretization and partitioning framework,
a straightforward approach is to directly treat all sampled MA
positions as candidates for index modulation. From an
information-theoretic perspective, such a design maximizes the
number of index states and thus increases the achievable index
modulation rate. However, directly indexing all sampled positions
is generally undesirable in practice. As the indexed positions
become denser, the spatial separation between adjacent positions
decreases, and the resulting channel responses become increasingly
similar. This reduces the distinguishability among index states in
the channel domain and makes reliable detection at the receiver
more challenging, thereby leading to a higher bit error rate (BER).

Therefore, it is essential to select a subset of representative
sampling points for IM. The objective is to enhance
the distinguishability among index states and improve the reliability
of ML detection. The primary focus of this work is on how to design
such representative anchor selection strategies\footnote{The joint
design of the number of index states and the corresponding anchor
selection, particularly under practical constraints such as channel
correlation and computational complexity, remains an open problem
and is left for future work.}.
Hence, we develop a family of MA-IM schemes that differ in their
representative port selection strategies.

As a baseline, a random activation scheme is considered,
where the MA can move freely within the selected mobility
block. A geometry-based scheme then fixes the MA at the
geometric center of each block, providing a deterministic
representative position. To further exploit channel
variations, an SNR-based scheme selects the port with the
largest channel gain within each block. In addition, a distance-based scheme employs a max--min
(i.e., maximizing the minimum pairwise channel distance,
which is sometimes referred to as a min--max formulation
in related literature \cite{Boyd2004Convex}) criterion to select the representative
port that maximizes the minimum channel distance among
different index states within each block.
Next, we consider a channel-domain anchor design that
removes the geometric partition constraint and directly
selects representative ports according to a global max--min
channel-distance criterion.
Furthermore, to fully capture the joint effect of index
selection and signal modulation, we propose a
constellation-aware anchor design, which directly optimizes
the minimum Euclidean distance of the resulting joint
index--modulation constellation, thereby achieving
significantly improved detection performance.

\subsection{Random activation (Scheme 1)}

In this scheme, we first divide the 2D movable region of the MA into
$N = N_1 \times N_2,$
where \(N_1\) and \(N_2\) denote the numbers of grids along the horizontal and vertical directions, respectively. 
To ensure distinct MA-IM states, the condition
$N < Q$
must be satisfied. 
This raises a practical question: when a grid index is selected for transmission, which sampling point within that grid should be used as the actual MA position?

The movable region is partitioned into $N$ geometric cells, and the $c$-th cell contains a
set of candidate sampling points denoted by $\mathcal{S}_c$.
In this scheme, the transmit position is randomly selected from the
candidate points inside the activated cell, i.e.,
\begin{equation}
  q_{\mathrm{tx}} \sim \mathrm{Unif}\big(\mathcal{S}_c\big),
  \label{eq:qtx_random}
\end{equation}
and the MA transmits from the corresponding physical coordinate $\boldsymbol{t}_{q_{\mathrm{tx}}}$ with $\mathrm{Unif}(\cdot)$ denoting a uniform random selection from the specified set \footnote{Compared with the other fixed-center transmission strategies
considered in this paper, the random sampling in
\eqref{eq:qtx_random} better reflects the continuous mobility
characteristic of MA. As will be shown in
the simulation results, this scheme is generally unreliable
in practice due to its poor detection reliability. Nevertheless, it
provides a realistic lower performance bound and serves as a
useful benchmark for evaluating the effectiveness of the
proposed MA-based index modulation schemes.}. 

Let 
$\boldsymbol{h}_q 
\triangleq 
\boldsymbol{f}(\boldsymbol{t}_q)
\in \mathbb{C}^{N_r}$
be the corresponding channel vectors. Consequently, the received signal $\boldsymbol{y}\in\mathbb{C}^{N_r\times 1}$ can be written as
\begin{equation}
\boldsymbol{y}
=
E_s s\,\boldsymbol{h}_{q_{\mathrm{tx}}}
+
\boldsymbol{n}.
\end{equation}
 Here, $E_s$ denotes the transmit energy per symbol and
$\boldsymbol{n}\in\mathbb{C}^{N_r\times 1}$ is the complex additive
white Gaussian noise (AWGN), satisfying $\boldsymbol{n}\sim\mathcal{CN}(0,N_0\boldsymbol{I})$. Moreover, $s \in \mathcal{S}_M$ denotes the transmitted modulation symbol drawn from an $M$-QAM constellation, where $|\mathcal{S}_M| = M$ and $\mathbb{E}[|s|^2] = 1$.

Although Scheme~1 reflects the random mobility of the movable antenna,
the unknown transmit position within each cell leads to unreliable
demodulation and degraded BER performance. Therefore, it mainly serves
as a baseline to motivate anchor-based designs, which aim to improve
the reliability of index transmission.

\subsection{Geometry-based anchoring (Scheme 2)}
To address the randomness and unreliability of Scheme~1,
a natural idea is to assign a deterministic representative
transmit position to each cell. A simple yet intuitive
choice is to exploit the geometric structure of the partition. In Scheme~2, each cell is represented by a fixed anchor point
determined based on the geometric structure of the partition.

Let $\mathbf c_c$ denote the geometric center of the $c$-th cell and let $\mathbf t_q$ denote the coordinate of the $q$-th sampling point. The representative port of cell $c$ is selected as the sampling point closest to the cell center. Hence,
the representative anchor point \(a_c\) is chosen as
\begin{equation}
    \label{}
a_c=\arg\min_{q\in \mathcal{S}_c} \bigl\| \boldsymbol{t}_q-\boldsymbol{c}_c \bigr\|^2
\end{equation}

Thus, each grid is represented by a unique anchor point, which serves as the actual MA transmit location for that grid, as shown in Fig.~\ref{fig5}.
\begin{figure}[htp]
	\centering
		{\includegraphics[width=0.4\textwidth]{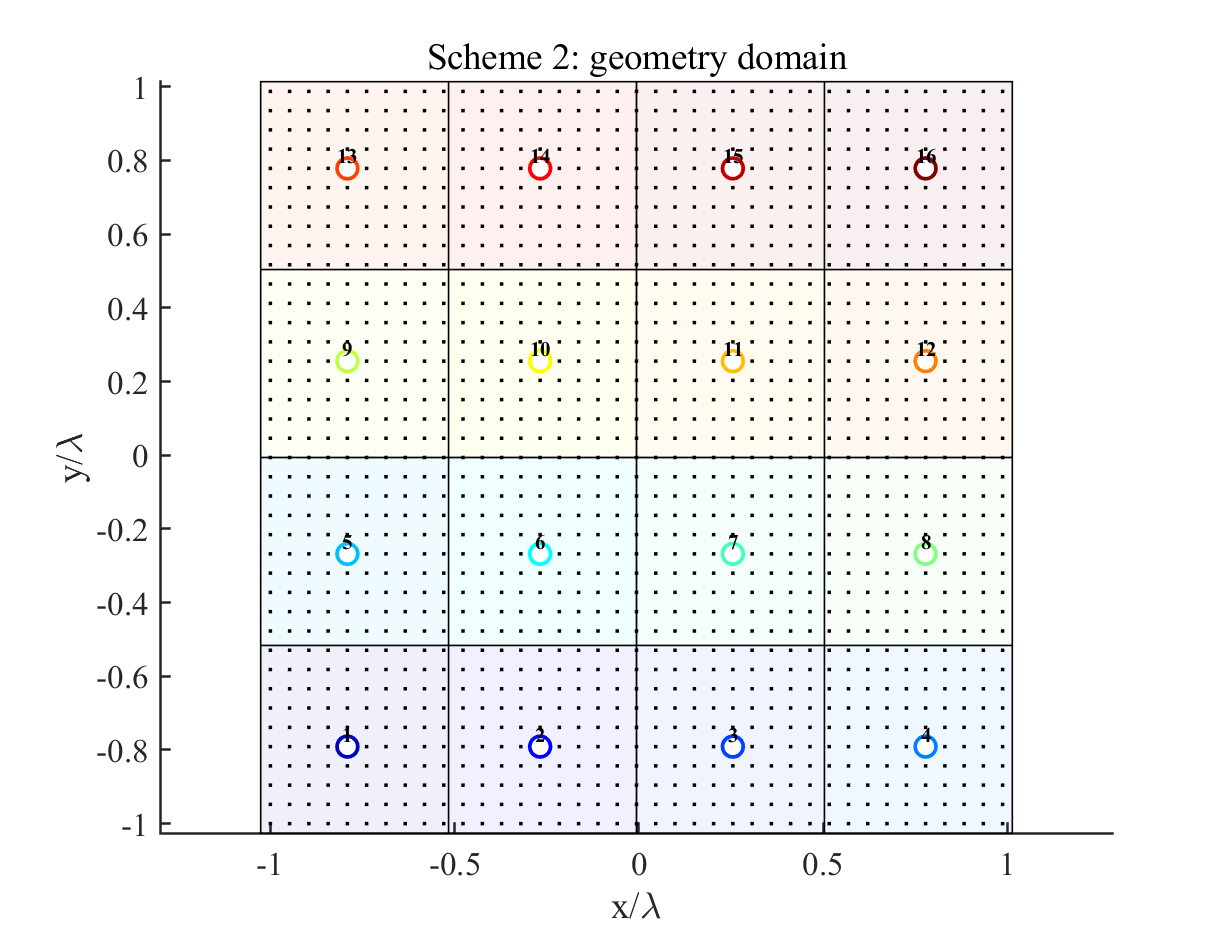}}	
		\caption{Example of MA indexing: each block corresponds to one index value, and the colored point denotes the selected physical transmit coordinate for that index.}
		\label{fig5}	
\end{figure}

Consequently, IM operates directly on this set of anchors.  
Based on Fig.~\ref{fig1}, the incoming bitstream can therefore be partitioned into two sub-vectors of lengths
$\log_2 M \quad $ {and} $\left\lfloor \log_2 N \right\rfloor$
denoted as \(\boldsymbol{p}_1\) and \(\boldsymbol{p}_2\). 
The bits in \(\boldsymbol{p}_1\) are mapped to an \(M\)-QAM symbol \(s\), whereas the IM bits in \(\boldsymbol{p}_2\) are used to select a unique MA position pattern
$I = \{1, 2, \ldots, N\}.$

Let $\mathcal{A}=\{a_1,a_2,\ldots,a_C\}$ denote the set of
anchor indices, where $a_c$ corresponds to the representative
sampling point selected in the $c$-th cell. When the index
state $c$ is activated, the MA directly transmits from the
anchor position $\boldsymbol{t}_{a_c}$.

The received signal
$\boldsymbol{y}\in\mathbb{C}^{N_r}$ can therefore be written as
\begin{equation}
\boldsymbol{y}
=
E_s s\,\boldsymbol{h}_{a_c}
+
\boldsymbol{n}.
\end{equation}

While Scheme~2 assigns a deterministic anchor to each cell
based purely on geometric proximity, the resulting anchor
positions do not necessarily correspond to favorable channel
conditions. In practical wireless environments, the channel
strength may vary significantly across different sampling
points within the same cell due to multipath propagation and
small-scale fading. As a result, selecting the geometric center
as the representative position may lead to suboptimal
transmission reliability.

To address this limitation, we next consider a channel-aware
representative selection strategy that exploits the
instantaneous channel strength within each cell.

\subsection{SNR-based selection (Scheme 3)}

To improve the reliability of index transmission while preserving the geometric partition structure, we consider a representative port selection strategy based on the instantaneous channel strength within each cell. This scheme is referred to as {Scheme~3}.

Hence, the representative port for the $c$-th cell is selected as the port with the largest channel gain within that cell, i.e.,
\begin{equation}
q_{c}^{\star}=\mathrm{arg}\max_{q\in \mathcal{S} _c} |\boldsymbol{h}_q|^2.
\label{eq:snr_selection}
\end{equation}
To facilitate practical implementation, the SNR-based
representative selection procedure is summarized in
Algorithm~\ref{alg:scheme3}. The algorithm simply scans all
candidate ports within each cell and selects the port with
the maximum channel gain as the representative position.

The selected representative ports form the set
\[
\mathcal{B}
=
\{q_1^\star,q_2^\star,\dots,q_C^\star\}.
\]

The effective channel corresponding to the $c$-th index state is therefore $
\tilde{\boldsymbol{h}}_c=\boldsymbol{h}_{q_{c}^{\star}}.
$

Compared with the geometry-based scheme (Scheme~2), Scheme~3 improves the received signal quality by selecting the strongest port in each cell, thereby increasing the effective SNR for each index state.

When the $c$-th index state is activated, the transmitted signal experiences the effective channel $\tilde {\boldsymbol{h}}_c$. The received signal can be expressed as
\begin{equation}
\boldsymbol{y}=E_ss\tilde{\boldsymbol{h}}_c+\boldsymbol{n}.
\end{equation}

\begin{algorithm}[htp]
\caption{Cell-wise SNR-Based Representative Selection (Scheme 3)}
\label{alg:scheme3}
\begin{algorithmic}[1]
\REQUIRE Candidate port sets $\{\mathcal{S}_c\}_{c=1}^{C}$,
channel coefficients $\{h_q\}$
\ENSURE Representative ports $\{q_c^\star\}_{c=1}^{C}$

\FOR{$c = 1$ to $C$}
    \STATE
    $
    q_c^\star
    =
    \arg\max_{q \in \mathcal{A}_c} |h_q|^2
    $
\ENDFOR

\STATE return $\{q_c^\star\}$
\end{algorithmic}
\end{algorithm}

\begin{figure}[htp]
	\centering
		{\includegraphics[width=0.4\textwidth]{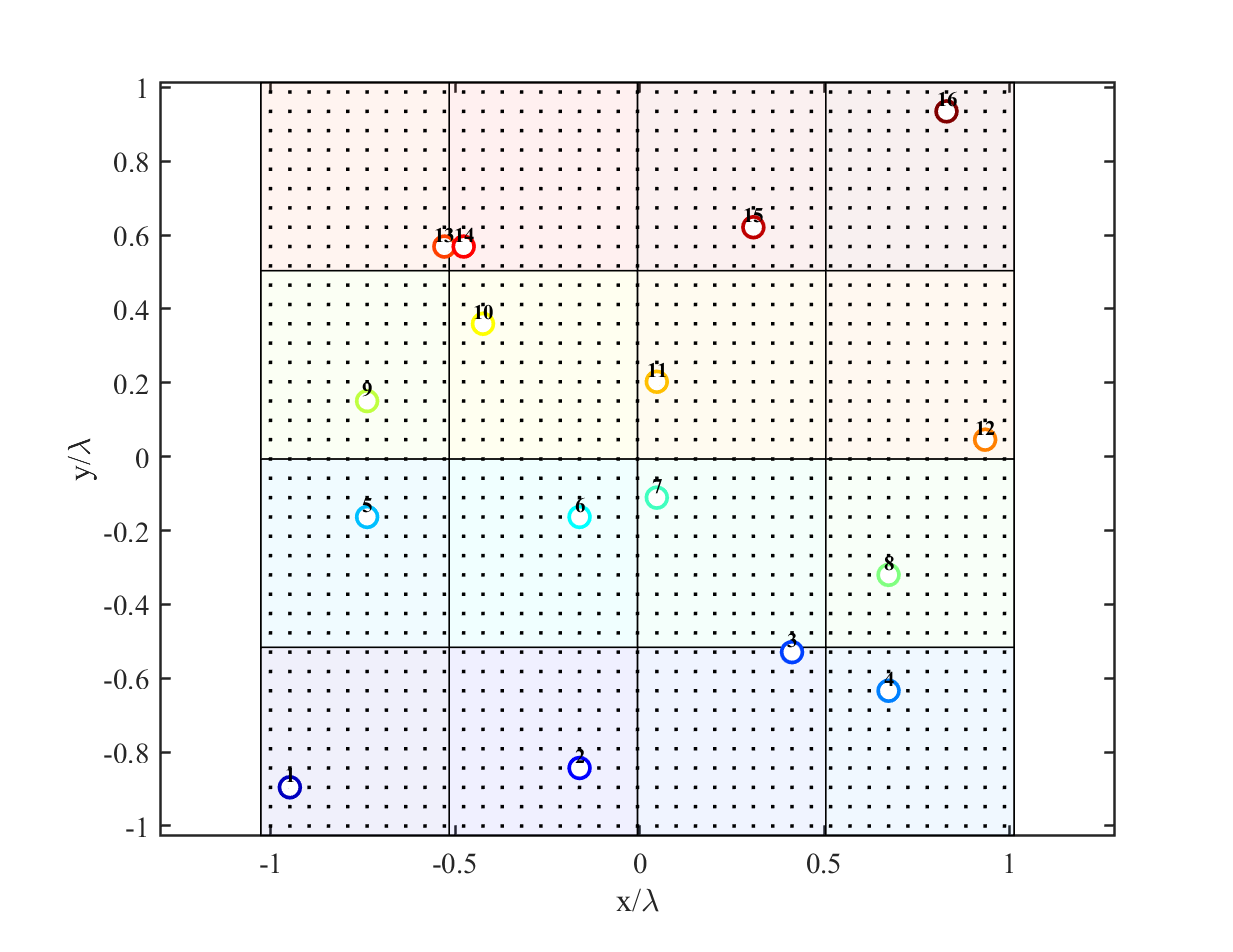}}	
		\caption{Example of MA indexing under Scheme~3. Black dots denote
candidate sampling positions, while colored circles indicate the
selected representative ports in each cell.}
		\label{figu6}	
\end{figure}

Fig.~\ref{fig8} illustrates the representative ports selected under
Scheme~3. The MA region is first partitioned into multiple
geometric cells, where each cell corresponds to one index state. Within
each cell, the sampling point with the strongest channel gain is chosen
as the representative transmit position according to
\eqref{eq:snr_selection}. The black dots denote all candidate sampling
positions, while the colored circles indicate the selected
representative ports. Compared with geometry-based anchoring, this
channel-aware selection strategy improves the effective SNR associated
with each index state and therefore enhances the reliability of index
detection.

While Scheme~3 selects the representative port in each cell
based on the strongest channel gain, this criterion focuses on
improving the signal strength of individual index states rather
than their mutual distinguishability. As a result, the selected
ports from different cells may still produce highly correlated
channel responses, which limits the separability of the
resulting index states.

To address this limitation, we next consider a representative
selection strategy that explicitly maximizes the minimum
pairwise channel distance among the selected ports.

\subsection{Distance-based selection (Scheme 4)}

To further improve the distinguishability among different
index states while preserving the geometric partition
structure, we consider a cell-constrained max--min
representative selection strategy, referred to as
{Scheme~4}.

Unlike Scheme~3, which selects the representative port
in each cell based on the strongest channel gain, the proposed
Scheme~4 explicitly considers the separability among
different index states. Specifically, one representative port
is deterministically selected from each cell such that the
minimum pairwise channel distance among all selected ports
is maximized. In this respect, the representative ports are obtained by solving
\begin{equation}
\{q_1^\star,\dots,q_C^\star\}
=
\arg\max_{q_1\in\mathcal{S}_1,\dots,q_C\in\mathcal{S}_C}
\min_{i\neq j}|\boldsymbol{h}_{q_i}-\boldsymbol{h}_{q_j}|.
\label{eq:scheme2mm}
\end{equation}

The optimization in \eqref{eq:scheme2mm} ensures that the channel states associated with different index positions are well separated in the complex channel space, thereby improving the distinguishability of different index states.

However, directly solving the combinatorial optimization in
\eqref{eq:scheme2mm} is computationally prohibitive when the
number of candidate ports is large. To address this issue, we
propose a greedy cell-constrained search algorithm to compute
the representative ports $\{q_1^\star,\dots,q_C^\star\}$.
The procedure is summarized in Algorithm~\ref{alg:scheme2mm}.

\begin{algorithm}[htp]
\caption{Greedy Cell-Constrained Max--Min Representative Selection (Scheme 4)}
\label{alg:scheme2mm}
\begin{algorithmic}[1]
\REQUIRE Candidate port sets $\{\mathcal{S}_c\}_{c=1}^{C}$, channel coefficients $\{\boldsymbol{h}_q\}_{q=1}^{Q}$, maximum iteration number $I_{\max}$
\ENSURE Representative ports $\{q_c^\star\}_{c=1}^{C}$

\STATE Initialize representative ports $q_c^{(0)}\in\mathcal{S}_c$ using geometric anchors
\STATE Set iteration index $i=0$

\REPEAT
    \FOR{$c=1$ to $C$}
        \STATE Fix representatives of other cells $\{q_j\}_{j\neq c}$
        \FOR{each candidate $q\in\mathcal{S}_c$}
            \STATE Compute distance metric
            \[
            \eta_c(q)=\min_{j\neq c}|\boldsymbol{h}_q-\boldsymbol{h}_{q_j}|
            \]
        \ENDFOR
        \STATE Update representative
        \[
        q_c^{(i+1)}=
        \arg\max_{q\in\mathcal{S}_c}\eta_c(q)
        \]
    \ENDFOR
    \STATE $i\leftarrow i+1$
\UNTIL{convergence or $i=I_{\max}$}

\STATE $q_c^\star = q_c^{(i)}$
\end{algorithmic}
\end{algorithm}

Once the representative ports set  $
\mathcal{C}=\{q_{c}^{\star}\}_{c=1}^{C}
$ are determined, the effective channel corresponding to the $c$-th index state is given by
$
\bar{\boldsymbol{h}}_c=\boldsymbol{h}_{q_{c}^{\star}}.
$

The received signal can be written as
\begin{equation}
\boldsymbol{y}=E_ss\bar{\boldsymbol{h}}_c+\boldsymbol{n}.
\end{equation}

\begin{figure}[htp]
	\centering
		{\includegraphics[width=0.4\textwidth]{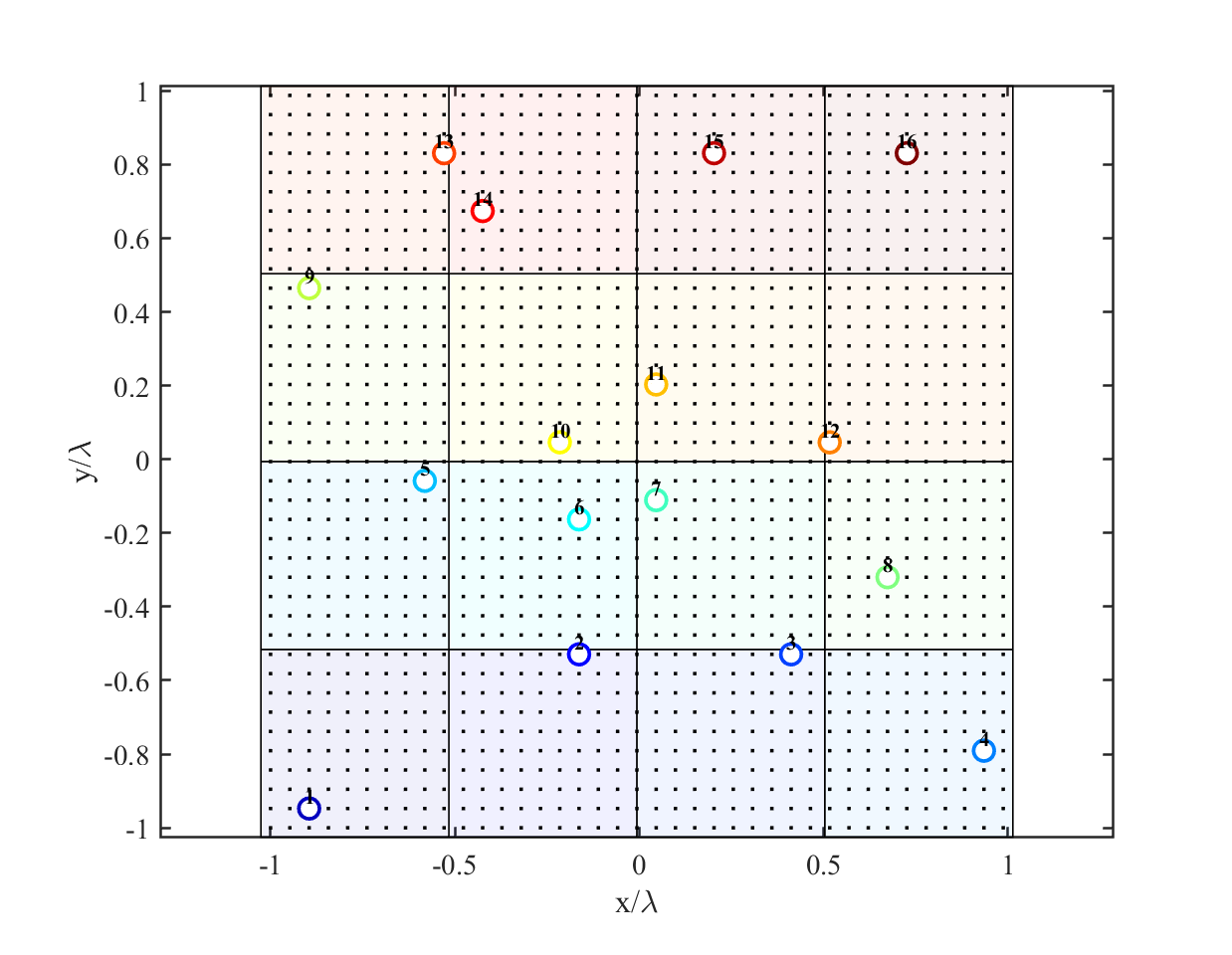}}	
		\caption{Example of MA indexing under Scheme~4. Black dots denote
candidate sampling positions, while colored circles indicate the
selected representative ports in each cell.}
		\label{figu7}	
\end{figure}

Fig.~\ref{figu7} shows the representative ports obtained by
Scheme~4. Compared with Fig.~\ref{figu6},
most representative ports are different because Scheme~4
optimizes the minimum channel distance among index states
rather than the individual channel strength. Nevertheless,
a few representative ports coincide with those in Scheme~3,
indicating that the strongest channel position may also yield
good channel separability in certain cells.

The previous schemes (Schemes~1--4) share a common design
principle in that the movable region is first partitioned into
geometric cells and one representative port is selected from
each cell. Although different criteria are used for
representative selection (random sampling, geometric anchoring,
SNR maximization, and max--min channel separation), the
underlying geometric partition constraint remains unchanged.

However, the wireless channel induced by MA may
vary highly non-uniformly across space, and spatial proximity
does not necessarily imply channel similarity. Consequently,
restricting the anchor positions to predefined geometric cells
may limit the achievable channel separability among index
states.

Motivated by this observation, we next consider a
channel-domain anchor design strategy that removes the
geometric partition constraint and directly optimizes the
anchor positions according to channel dissimilarity.

\subsection{Channel-domain max--min anchor design (Scheme 5)}

Unlike Schemes~1--4, which follow a cell-constrained design
based on geometric partitioning of the movable region,
Scheme~5 constructs the index states directly in the channel
domain. The key motivation is that MA-induced channels may
vary highly non-uniformly across space, whereas geometric
partitioning imposes a uniform spatial grid that does not
necessarily reflect the underlying channel structure.

Consequently, spatially adjacent cells in Schemes~1--4 may
still correspond to highly similar channel responses. In such
cases, the minimum separation between anchor channels is
essentially limited by the grid resolution. When the number of
index states $N$ increases, the geometric cells must shrink
accordingly, which inevitably reduces the minimum channel
distance between neighboring anchors and leads to a denser
joint constellation, thereby degrading the reliability of
index detection.

In contrast, Scheme~5 does not rely on geometric proximity.  
Instead, it explicitly selects anchor positions based on
{channel dissimilarity} and aims to {maximize the minimum
pairwise channel distance} among the chosen anchors.  
This max--min criterion ensures that the resulting anchor channels are
as mutually distinguishable as possible, yielding a joint constellation
with significantly larger minimum Euclidean distance and thereby
substantially more robust ML demodulation.

We seek an anchor set
\begin{equation}
    \label{}
    \mathcal{D}
=
\{a_1,a_2,\ldots,a_N\} 
\subseteq 
\{1,2,\ldots,Q\},
\end{equation}
that maximizes the minimum channel separation, namely
\begin{equation}
\label{eq:maxmin}
\max_{\mathcal{D}:\,|\mathcal{D}|=N}
\;\min_{i\neq j}
\big\|\boldsymbol{h}_{a_i} - \boldsymbol{h}_{a_j}\big\|.
\end{equation}
This max--min criterion is aligned with classical codebook optimization
and directly improves the minimum distance of the joint constellation,
which in turn lowers the ABEP union bound and the BER.

Since the combinatorial problem \eqref{eq:maxmin} is NP-hard \cite{gonzalez1985clustering}, we employ
the farthest-point sampling (FPS) strategy, which provides an effective
greedy approximation \cite{vazirani2001approximation}.  
The procedure is summarized in Algorithm~\ref{alg:farthest_point_anchor}.

\begin{algorithm}[t]
  \caption{Max--min channel-domain anchor selection via farthest-point sampling}
  \label{alg:farthest_point_anchor}
  \begin{algorithmic}[1]
    \REQUIRE Complex channel samples $\{\boldsymbol{h}_q\}_{q=1}^Q$, target number of anchors $N$
    \ENSURE Anchor index set $\mathcal{D} = \{a_1,\ldots,a_N\}$

    \STATE \textbf{ Step 1: Pre-compute pairwise channel distances}
    \FOR{$q_1 = 1$ \TO $Q$}
      \FOR{$q_2 = 1$ \TO $Q$}
        \STATE $D(q_1,q_2) = |\boldsymbol{h}_{q_1} - \boldsymbol{h}_{q_2}|$
      \ENDFOR
    \ENDFOR

    \STATE \textbf{ Step 2: Initialize the first anchor}
    \STATE Compute the average distance of each point:
           $\bar d(q) = \frac{1}{Q-1} \sum_{q' \neq q} D(q,q')$
    \STATE Select the point with the largest average distance:
           $a_1 = \arg\max_{q} \bar d(q)$
    \STATE $\mathcal{D} \gets \{a_1\}$

    \STATE \textbf{ Step 3: Iteratively select the remaining $C-1$ anchors}
    \FOR{$c = 2$ \TO $N$}
        \FOR{each $q \in \{1,\ldots,Q\}$}
           \STATE Compute the distance to current anchor set:
           \[
              d_{\min}(q) = \min_{a \in \mathcal{D}} D(q,a)
           \]
        \ENDFOR
        \STATE Exclude already selected anchors by setting
              $d_{\min}(a)=0$ for all $a\in\mathcal{D}$
        \STATE Select the farthest point: 
              \[
                a_c = \arg\max_{q} d_{\min}(q)
              \]
        \STATE Update: $\mathcal{D} \gets \mathcal{D} \cup \{a_c\}$
    \ENDFOR

    \STATE \textbf{return} $\mathcal{D}$
  \end{algorithmic}
\end{algorithm}

Furthermore, Fig.~\ref{fig66} shows the anchors obtained by
Algorithm~\ref{alg:farthest_point_anchor}. Compared with the
geometry-driven distribution in Fig.~\ref{fig5}, the anchors
selected by Scheme~5 are determined entirely in the channel
domain and are therefore not constrained by the geometric
grid. As a result, the spatial spacing among anchors is
adaptively adjusted according to channel dissimilarity.
This can be clearly observed in Fig.~\ref{fig66}, where
geometrically close positions may correspond to highly
distinct channel realizations, while the smallest channel
distance may occur between positions that are not spatially
adjacent. Such behavior highlights the benefit of channel-
domain anchor design in enlarging the minimum channel
separation among index states.
\begin{figure}[htp]
	\centering
		{\includegraphics[width=0.4\textwidth]{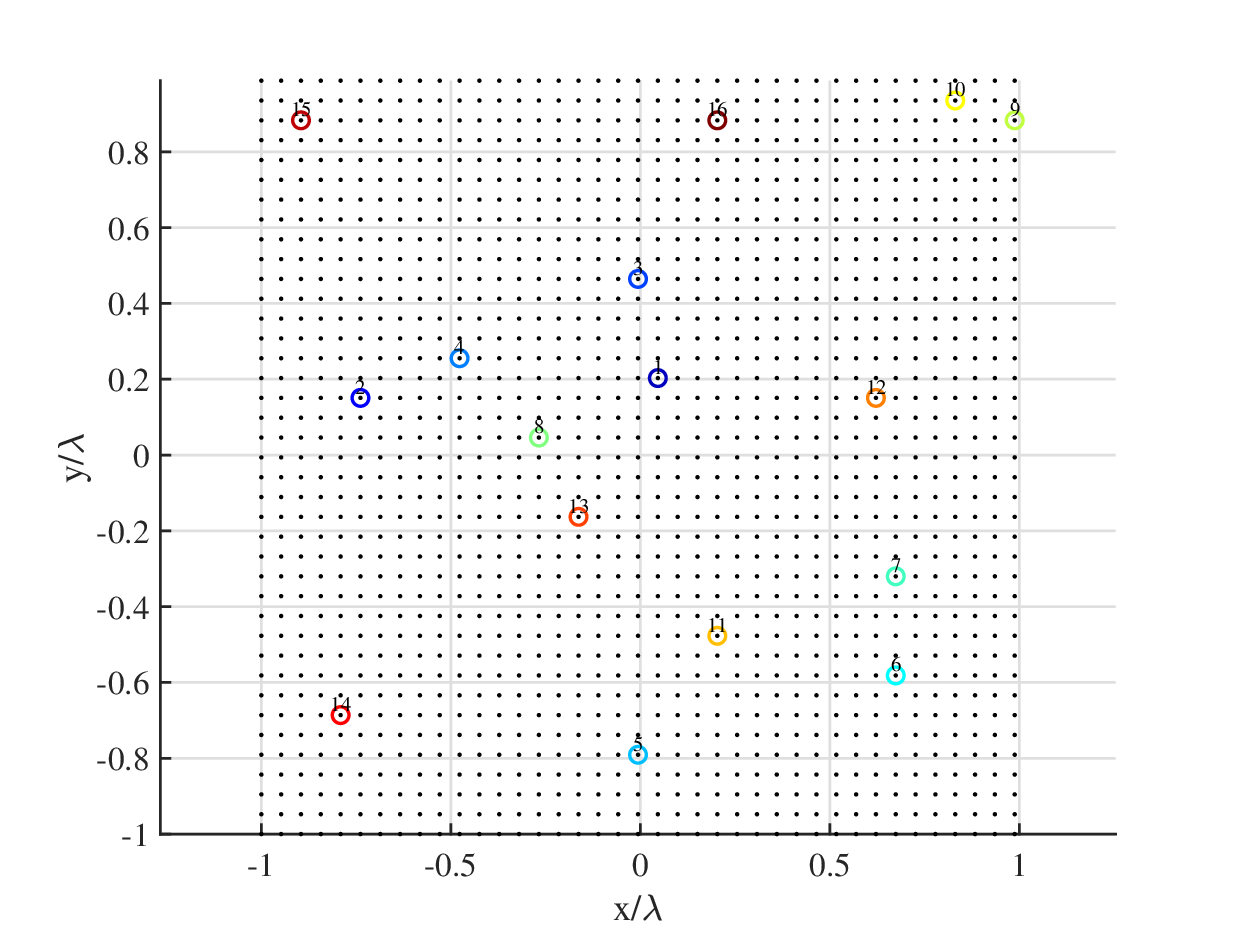}}	
		\caption{Geometry-domain visualization of the max--min anchor selection. Black dots represent all sampled MA positions, while colored circles 
indicate the selected $N$ anchors. }
		\label{fig66}	
\end{figure}

Once the anchor set $\mathcal{D}$ is obtained, the MA activates the
position $\boldsymbol{t}_{a_I}$ corresponding to the selected index $I$
and transmits symbol $s$.  
The received signal is then
\begin{equation}
\boldsymbol{y}
= E_s s\,\boldsymbol{h}_{a_I} + \boldsymbol{n}.
\end{equation}

For the anchor set $\mathcal{D}$, we define the minimum channel
separation as
\begin{equation}
d_{\min}^{h}
=
\min_{i\neq j}
\big\|\boldsymbol{h}_{a_i}-\boldsymbol{h}_{a_j}\big\|.
\end{equation}

Although the overall ABEP of MA-IM with QAM is determined by the
Euclidean distances of the joint constellation points
$\{\boldsymbol{h}_{a_i}s_m\}$, the channel-side metric
$d_{\min}^{h}$ provides a useful measure of the distinguishability
between index states.

In particular, when the same QAM symbol is transmitted from two
different anchors, the corresponding pairwise distance becomes
$|s_m|\,\|\boldsymbol{h}_{a_i}-\boldsymbol{h}_{a_j}\|$. Therefore,
increasing $d_{\min}^{h}$ generally improves the separability of
index states and reduces the probability of index-detection errors.
However, the overall BER performance ultimately depends on the
geometry of the joint constellation
$\{\boldsymbol{h}_{a_i}s_m\}$, which motivates the constellation-aware
anchor optimization introduced in Scheme~6.

\subsection{Joint constellation–distance anchoring (Scheme 6)}

Scheme~5 selects anchor ports by maximizing the minimum separation
between channel vectors, i.e., $\min_{i\neq j}\|\boldsymbol{h}_i-\boldsymbol{h}_j\|$.
However, the detection performance of IM is ultimately
determined by the Euclidean distance between the received signal
constellation points.

To better align the anchor design with the BER performance, we further
consider a joint constellation-distance criterion. Specifically, the
anchor set $\mathcal{E}$ is selected by maximizing the minimum distance
between all possible received signal points
\begin{equation}
D_{\min}^{\mathrm{joint}}(\mathcal{E})
=
\min_{(i,m)\neq(j,n)}
\big\|
\boldsymbol{h}_{a_i}s_m-\boldsymbol{h}_{a_j}s_n
\big\|.
\end{equation}

Compared with Scheme~5, which only considers the separation of channel
vectors, Scheme~6 directly optimizes the minimum Euclidean distance of
the joint spatial–symbol constellation, thereby providing a design
criterion more closely related to the BER performance.

The same greedy farthest-point selection framework used in
Scheme~5 can be directly applied by replacing the channel-distance
metric with the joint constellation distance defined above.
Therefore, the algorithmic procedure is omitted for brevity.

\subsection{Computational Complexity Discussion}

We briefly analyze the computational complexity of the
proposed representative-port selection strategies.

For Scheme~3, the representative port in each cell is
determined by selecting the candidate port with the largest
channel gain. Suppose that the $c$-th cell contains
$|\mathcal{S}_c|$ candidate ports. The search within that
cell requires $\mathcal{O}(|\mathcal{S}_c|)$ operations.
Summing over all cells yields the overall complexity $\mathcal{O}\!\left(\sum_{c=1}^{C}|\mathcal{S}_c|\right)
=
\mathcal{O}(Q),$
which scales linearly with the number of candidate ports.

For Scheme~4, the representative ports are obtained through
an iterative max--min optimization. In each iteration,
every candidate port in a cell evaluates its minimum channel
distance to the representative ports of the other cells.
Assuming that each cell contains approximately $Q/C$
candidate ports, the computational cost per iteration is
approximately $\mathcal{O}(CQ)$. If the algorithm converges
after $I$ iterations, the overall complexity becomes
$\mathcal{O}(ICQ)$.

For Scheme~5, the anchor set is optimized globally in the
channel domain using a farthest-point sampling strategy.
The computational cost is dominated by the pairwise channel
distance computation $\mathcal{O}(Q^2)$ and the iterative
anchor updates $\mathcal{O}(NQ)$. Therefore, the overall
complexity scales as $\mathcal{O}(Q^2 + NQ).$

Scheme~6 adopts the same greedy farthest-point framework as
Scheme~5 but replaces the channel-distance metric with the
joint constellation distance
$\|\boldsymbol{h}_{a_i}s_m-\boldsymbol{h}_{a_j}s_n\|$.
Since the algorithmic structure remains unchanged, Scheme~6
has the same asymptotic complexity $\mathcal{O}(Q^2 + NQ)$.
The only additional overhead arises from evaluating the
constellation-dependent distance metric, which introduces a
constant factor related to the modulation order $M$ but does
not change the complexity order.

In summary, Schemes~1--3 have relatively low computational
complexity since the representative ports are selected
independently within each cell. Scheme~4 introduces an
iterative optimization to enlarge the channel separation
among index states, resulting in higher computational
complexity but improved detection reliability. Schemes~5 and
6 remove the geometric partition constraint and perform
global anchor optimization in the channel domain, which
leads to the highest computational cost among the considered
strategies\footnote{Further complexity reduction and scalable anchor selection designs constitute an interesting direction for future work.}.

Nevertheless, the anchor-selection procedures in
Schemes~3--6 are executed only once during system
configuration and can be performed offline before data
transmission. Therefore, the additional computational
overhead does not significantly affect the real-time
operation of practical MA-IM systems.

To facilitate comparison and provide a clearer overview of the proposed designs, we summarize the key characteristics of all schemes in Table~\ref{tab:scheme_comparison}. The performance trends are provided as qualitative insights based on the design principles of each scheme, and will be quantitatively validated in the simulation section.

\begin{table*}[htp]
\centering
\caption{Comparison of the proposed MA-IM schemes in terms of design principle, complexity, and performance characteristics}
\begin{tabular}{c|c|c|c|c}
\hline
\textbf{Scheme} & \textbf{Anchor Design Principle} & \textbf{Search Scope} & \textbf{Complexity} & \textbf{Performance Trend (Qualitative)} \\
\hline
Scheme 1 
& Random activation within cell 
& Local (cell-wise) 
& High (receiver-side exhaustive) 
& Very low (baseline) \\

Scheme 2 
& Geometry-based center selection 
& Local (cell-wise) 
& Low 
& Moderate \\

Scheme 3 
& SNR-based representative selection 
& Local (cell-wise) 
& Low 
& Moderate \\

Scheme 4 
& Cell-constrained max--min channel separation 
& Local (cell-wise, optimized) 
& Medium 
& High \\

Scheme 5 
& Global max--min channel separation 
& Global (all sampling points) 
& High (offline) 
& Moderate--high \\

Scheme 6 
& Joint constellation-distance-aware optimization 
& Global (constellation-aware) 
& High (offline) 
& Highest \\

\hline
\end{tabular}
\label{tab:scheme_comparison}
\end{table*}

\section{Proposed detector}
\label{sec3}
In this section, ML detectors are developed for the five
schemes introduced in the previous section. Furthermore,
a low-complexity detection method is proposed for
Scheme~1 to reduce the computational burden of exhaustive
ML search.
\subsection{ML Detector}

For the MA-IM schemes introduced in Section~\ref{sec22}, the receiver
jointly detects the MA position index and the transmitted modulation
symbol using a maximum-likelihood (ML) detector. The ML detector
evaluates all feasible transmit hypotheses and selects the one that
minimizes the Euclidean distance between the received signal and the
corresponding channel response.

The general ML decision rule can be written as
\begin{equation}
\label{eq:ML_general}
(\hat{q},\hat{s})
=
\arg\min_{\substack{q\in\mathcal{Q}\\ s\in\mathcal{S}_M}}
\left\|
\boldsymbol{y}
-
E_s s\,\boldsymbol{h}_q
\right\|^2 ,
\end{equation}
where $\mathcal{Q}$ denotes the set of allowable MA sampling positions
and $\mathcal{S}_M$ represents the $M$-QAM constellation.

The main difference among the proposed schemes lies in the definition
of the candidate set $\mathcal{Q}$.

\textbf{Scheme~1:}
the MA position is randomly selected within the activated grid during
transmission. Since the receiver does not know the exact transmit
location, the ML detector must search over the entire sampling set
\begin{equation}
    \label{equ23}
    \mathcal{Q}_1=\{1,2,\ldots,Q\},
\end{equation}
resulting in a per-symbol detection complexity of
$\mathcal{O}(QM)$.

After obtaining the detected sampling-point index $\hat q_1$,
the corresponding grid index can be determined through the
mapping
\begin{equation}
\label{equ24}
\hat I_1 = \ell(\hat q_1),
\end{equation}
where $\ell(\cdot)$ maps a sampling-point index to its
associated grid label.

\textbf{Scheme~2--Scheme~5:}
the MA transmission is restricted to a set of $N$ representative anchor
positions. Let the anchor set be
\[
\mathcal{Q}_a=\{a_1,a_2,\ldots,a_N\}.
\]
Although the anchor-selection strategies differ across these schemes
(geometry-based, SNR-based, distance-based, and channel-domain
optimization), the resulting ML detector shares the same structure
\begin{equation}
\label{eq:ML_anchor}
(\hat{c},\hat{s})
=
\arg\min_{\substack{c\in\{1,\ldots,N\}\\ s\in\mathcal{S}_M}}
\left\|
\boldsymbol{y}
-
E_s s\,\boldsymbol{h}_{a_c}
\right\|^2 .
\end{equation}

Since the search is only performed over the $N$ anchor positions,
the per-symbol detection complexity is reduced to
$\mathcal{O}(NM)$.

Therefore, compared with Scheme~1, the representative-anchor-based
schemes significantly reduce the detection complexity while improving
the reliability of index detection through more structured anchor
design.

\subsection{Low-complexity two-stage detector for scheme 1}
As revealed by expressions \eqref{equ23}--\eqref{equ24}, Scheme~1 requires the ML detector to
evaluate the distance metric over all $Q$ sampling points, resulting in a
computational complexity that grows linearly with $Q$.  
Therefore, when $Q \gg N$, the complexity of Scheme~1 becomes significantly
higher than that of Schemes~2-5.  
More importantly, despite employing an exhaustive ML search, Scheme~1 does not
necessarily achieve better error performance.  
This is because the channel responses of adjacent sampling points are highly
similar, which leads to extremely small decision distances between candidate
positions.  
Consequently, the ML detector becomes more susceptible to noise-induced
confusion, thereby yielding a higher detection error rate.  
In this sense, Scheme~1 simultaneously suffers from the dual drawbacks of
the highest computational complexity and the weakest decision
separability.

\begin{algorithm}[t]
\caption{Two-stage detector with top-$K$ anchors}
\label{alg:two_stage}
\begin{algorithmic}[1]

\REQUIRE 
Received signal $\boldsymbol{y}$, 
anchor channels $\{\boldsymbol{h}_{a_c}\}_{c=1}^N$ from Scheme~1,
sampling-point channels $\{\boldsymbol{f}(\boldsymbol{t}_q)\}_{q=1}^Q$, 
QAM alphabet $\mathcal{S}_M$, 
grid partitions $\{\mathcal{S}_c\}_{c=1}^N$, 
and $K$ ($K<N$).

\ENSURE 
Detected grid index $\hat{I}_{\mathrm{two}}$ and symbol $\hat{s}_{\mathrm{two}}$.

\vspace{0.5em}
\STATE \textbf{ Stage~1: Coarse detection using anchor-based ML metrics}
\FOR{$c=1$ \TO $N$}
    \STATE Compute coarse ML metric
    \begin{equation}
        \label{eq:stage1_metric}
                D_c = \min_{s\in\mathcal{S}_M}
        \bigl\|\,\boldsymbol{y}-s\,\boldsymbol{h}_{a_c}\,\bigr\|^2 .
    \end{equation}
\ENDFOR
\STATE Sort $\{D_c\}$ in ascending order and select top-$K$ anchors:
\[
   \mathcal{C}_K = \{c_1, c_2, \ldots, c_K\}.
\]

\vspace{0.5em}
\STATE \textbf{ Stage~2: Refined search over the union of top-$K$ grids}
\STATE Construct refined candidate set
\[
   \mathcal{U}_K = \bigcup_{c\in\mathcal{C}_K} \mathcal{S}_c .
\]
\STATE Perform fine ML search
\begin{equation}
    \label{eq:stage2_union}
    (\hat{q}_{\mathrm{two}},\hat{s}_{\mathrm{two}})
=
\arg\min_{\substack{q\in\mathcal{U}_K \\ s\in\mathcal{S}_M}}
\bigl\|\,\boldsymbol{y}-s\,\boldsymbol{f}(\boldsymbol{t}_q)\,\bigr\|^2 .
\end{equation}

\vspace{0.5em}
\STATE \textbf{ Final output}
\STATE Map the detected sampling index to its grid
\[
   \hat{I}_{\mathrm{two}}
   = \ell(\hat{q}_{\mathrm{two}}).
\]

\STATE \textbf{return} $\big(\hat{I}_{\mathrm{two}}, \hat{s}_{\mathrm{two}}\big)$.

\end{algorithmic}
\end{algorithm}

To alleviate these issues, this subsection proposes a 
low-complexity two-stage detector with top-$K$ anchors, as shown in Algorithm \ref{alg:two_stage}.  
The key idea is as follows:  
(i)~first leverage the anchor points of Scheme~1 to perform a coarse search
over all $N$ anchors and select the top-$K$ grids with the smallest
anchor-based ML metrics;  
(ii)~then, within these $K$ high-confidence grids, refine the detection by
searching only among the sampling points belonging to the selected grids.

% The proposed two-stage detector proceeds as follows.

% \textbf{Stage~1 (coarse grid detection with top-$K$ anchors):}  
% For each anchor $a_c$, we compute the anchor-based ML metric
% \begin{equation}
%   D_c 
%   \triangleq 
%   \min_{s \in \mathcal{S}_M}
%   \big\| \boldsymbol{y} - s\,\boldsymbol{h}_{a_c} \big\|^2,
%   \quad c \in \{1,\ldots,N\}.
%   \label{eq:stage1_metric}
% \end{equation}
% Then we sort $\{D_c\}$ in ascending order and select the index set of the
% top-$K$ anchors as
% \begin{equation}
%   \mathcal{C}_K 
%   \triangleq 
%   \big\{ c_1, c_2, \ldots, c_K \big\},
% \end{equation}
% where $\{c_1,\ldots,c_K\} \subset \{1,\ldots,N\}$ correspond to the $K$
% smallest metrics in \eqref{eq:stage1_metric}.

% \textbf{Stage~2 (refined search over the union of top-$K$ grids):}  
% Conditioned on the coarse selection $\mathcal{C}_K$, the search is refined
% only within the sampling points associated with the top-$K$ grids:
% \begin{equation}
% \left( \hat{q}_{\mathrm{two}},\hat{s}_{\mathrm{two}} \right) =\mathrm{arg}\min_{q\in \mathcal{U} _K} \min_{s\in \mathcal{S} _M} \bigl\| \boldsymbol{y}-s\,\boldsymbol{f}(\boldsymbol{t}_q) \bigr\| ^2,
%   \label{eq:stage2_union}
% \end{equation}
% where the candidate set $\mathcal{U}_K$ is defined as the union
% \begin{equation}
%   \mathcal{U}_K 
%   \triangleq 
%   \bigcup_{c \in \mathcal{C}_K} \mathcal{S}_c .
% \end{equation}
% The final detected grid index is then given by
% \begin{equation}
%   \hat{I}_{\mathrm{two}} = \ell\big(\hat{q}_{\mathrm{two}}\big).
% \end{equation}
% where $\ell(\cdot)$ maps a sampling-point index to its corresponding grid label.

In terms of complexity, the full ML detector
requires $\mathcal{O}(QM)$ metric evaluations per symbol.  
By contrast, the proposed two-stage detector with top-$K$ anchors needs
$\mathcal{O}(NM)$ evaluations to compute $\{D_c\}$ in
\eqref{eq:stage1_metric}, plus $\mathcal{O}\big(|\mathcal{U}_K|\big)
\approx \mathcal{O}\big(MK\,Q/N\big)$
evaluations in the refined search of \eqref{eq:stage2_union} when the sampling
points are evenly distributed among grids.  
Hence, the overall complexity scales as $\mathcal{O}\big(MN + MK\,Q/N\big),$
which yields a substantial reduction compared to $\mathcal{O}(MQ)$ when
$Q \gg N$ and $K \ll N$.  
Moreover, since the coarse stage already prunes unlikely grids and
concentrates the fine search on high-confidence regions, the performance
loss relative to full ML is typically minor.

\section{Performance analysis}
\label{sec4}

\subsection{Throughput and Spectral-Efficiency Analysis}

We first analyze the achievable spectral efficiency (SE) of the
proposed MA-IM framework.
For a conventional MA system without IM, information is carried solely by the modulation
symbol $s\in\mathcal{S}_M$, where $|\mathcal{S}_M|=M$ denotes the
size of the $M$-QAM constellation. The resulting SE is
\begin{equation}
R_{\mathrm{MA}} = \log_2 M .
\end{equation}

In the proposed MA-IM framework, the movable region is
discretized into $N$ index states. During each channel use,
$\lfloor \log_2 N \rfloor$ bits are conveyed by the index of the
selected position, in addition to the $\log_2 M$ bits carried
by the modulation symbol. Hence, the overall throughput becomes
\begin{equation}
R_{\mathrm{MA\text{-}IM}}
=
\log_2 M + \left\lfloor \log_2 N \right\rfloor .
\end{equation}

Compared with the conventional MA system, the proposed MA-IM
scheme therefore achieves an SE gain of
\begin{equation}
\Delta R
=
\left\lfloor \log_2 N \right\rfloor .
\end{equation}

If all candidate sampling points are used as index states
($N=Q$), the maximal achievable throughput is
\begin{equation}
R_{\mathrm{MA\text{-}IM}}^{\max}
\approx
\log_2 M + \left\lfloor \log_2 Q \right\rfloor .
\end{equation}

This result indicates that the achievable SE of MA-IM grows
with the number of distinguishable spatial sampling points.
In practice, however, the number of usable index states is
limited by channel correlation and detection reliability,
which motivates the anchor-selection strategies proposed in
Section~\ref{sec22}.

\subsection{ABEP analysis}

This subsection provides a theoretical characterization of the
ABEP of the proposed MA-IM
framework under ML detection.

We focus on Schemes~2--6, in which each index state is mapped
to a unique and deterministic anchor position $a_c$.
This property greatly simplifies the analysis since the ML
detector only needs to examine the $N$ anchor states rather
than the entire spatial sampling grid.

For anchor $a_c$ and QAM symbol $s_m$, the noiseless received
signal can be written as
\begin{equation}
x_{c,m} = h_{a_c}s_m .
\end{equation}

Collecting all possible index–modulation pairs yields a joint
constellation
$\mathcal{X} = \{x_\ell\}_{\ell=1}^{\mathcal{L}},
\qquad
\mathcal{L}=NM .$

Under AWGN with noise variance $N_0$, the ML decision between
two distinct constellation points $x_\ell$ and $x_{\ell'}$
depends only on their Euclidean separation. The pairwise
error probability (PEP) is therefore
\begin{equation}
P(\ell \to \ell')
=
Q\!\left(
\frac{|x_\ell-x_{\ell'}|}{\sqrt{2N_0}}
\right),
\qquad
\ell'\neq\ell ,
\label{eq:pep}
\end{equation}
where $Q(\cdot)$ denotes the Gaussian $Q$-function.

Let $w(\ell,\ell')$ denote the Hamming distance between the
bit labels associated with $x_\ell$ and $x_{\ell'}$.
Following the classical union bound for digital modulation
\cite{proakis2001digital}, the conditional bit error
probability satisfies
\begin{equation}
P_b|x_\ell
\le
\frac{1}{B_{\mathrm{tot}}}
\sum_{\ell'\neq\ell}
w(\ell,\ell')P(\ell\to\ell'),
\end{equation}
where
$B_{\mathrm{tot}}
=
\log_2 M+\left\lfloor\log_2 N\right\rfloor$
is the total number of transmitted bits per channel use.

Assuming all constellation points are equiprobable, the ABEP
is upper-bounded by
\begin{equation}
\label{eq:abep_union_bound}
P_b
\le
\frac{1}{\mathcal{L}B_{\mathrm{tot}}}
\sum_{\ell=1}^{\mathcal{L}}
\sum_{\ell'\neq\ell}
w(\ell,\ell')
Q\!\left(
\frac{|x_\ell-x_{\ell'}|}{\sqrt{2N_0}}
\right).
\end{equation}

The bound in \eqref{eq:abep_union_bound} accounts for all
pairwise confusions in the joint index–modulation
constellation and therefore provides a tight approximation
of the BER performance in the high-SNR regime.

Importantly, \eqref{eq:abep_union_bound} shows that the error
performance of MA-IM is governed by the geometry of the joint
constellation $\{h_{a_c}s_m\}$. Consequently, anchor
selection plays a crucial role in shaping the Euclidean
distances among constellation points. In particular,
anchor-design strategies that enlarge the minimum joint
distance, such as the channel-domain max–min construction
(Scheme~5) and the joint constellation optimization
(Scheme~6), lead to improved ABEP performance.

\section{Simulation results}
\label{sec:simulation}
This subsection presents a comprehensive numerical evaluation of the
proposed MA-IM transmission framework. We examine the detection
performance of the three index-modulation schemes developed in this
paper.
All simulation results are obtained under a unified setup to ensure a
fair comparison. The movable antenna (MA) operates over a two-dimensional
region $\mathcal{C}=[-G,G]^2$ with sufficiently dense spatial sampling according to Lemma \ref{lem1}.
To reflect a realistic system configuration, we consider a
{carrier frequency of $1\,$GHz}, corresponding to a wavelength of
$\lambda = 0.3\,$m. Unless otherwise
specified, let $G$=1m \footnote{The multipath channel is generated according to a stochastic model, where the phases are randomly drawn following standard statistical distributions (e.g., Rayleigh fading with uniformly distributed phases). As such, no fixed multipath profile is specified. All simulation results are obtained by averaging over independent channel realizations to ensure statistical reliability.}. Since this work is the first to introduce the MA-IM concept, no existing benchmark 
 algorithms are available for direct comparison. Since this work is the first to introduce the MA-IM framework,
there are no existing benchmark algorithms specifically designed
for MA-IM.

Although FA-IM and FAG-IM schemes in \cite{zhu2024index,guo2025fluid} also
employ antenna-port indexing, their design philosophy differs
from the proposed MA-IM framework. In particular, the methods
in \cite{zhu2024index,guo2025fluid} assume a given set of antenna ports and focus on
designing IM transmission over these ports, whereas the present
work aims to select the optimal ports from a dense set of spatial
sampling points within the movable region. Moreover, the
approaches in \cite{zhu2024index,guo2025fluid} are developed for MIMO systems, while the
MA-IM framework considered in this paper focuses on the SISO case.
Therefore, a direct comparison is not straightforward\footnote{In fact, once the optimal ports are determined, the resulting
system reduces to a conventional port-index modulation structure.}. 
Instead, to demonstrate the SE advantage of
the proposed MA-IM framework, we compare it with conventional
QAM systems transmitting the same number of bits per channel use.

 %Nevertheless, the three schemes 
% proposed in this paper provide meaningful internal baselines through which their 
% relative performance can be evaluated.

\begin{figure}[htp]
	\centering
		{\includegraphics[width=0.40\textwidth]{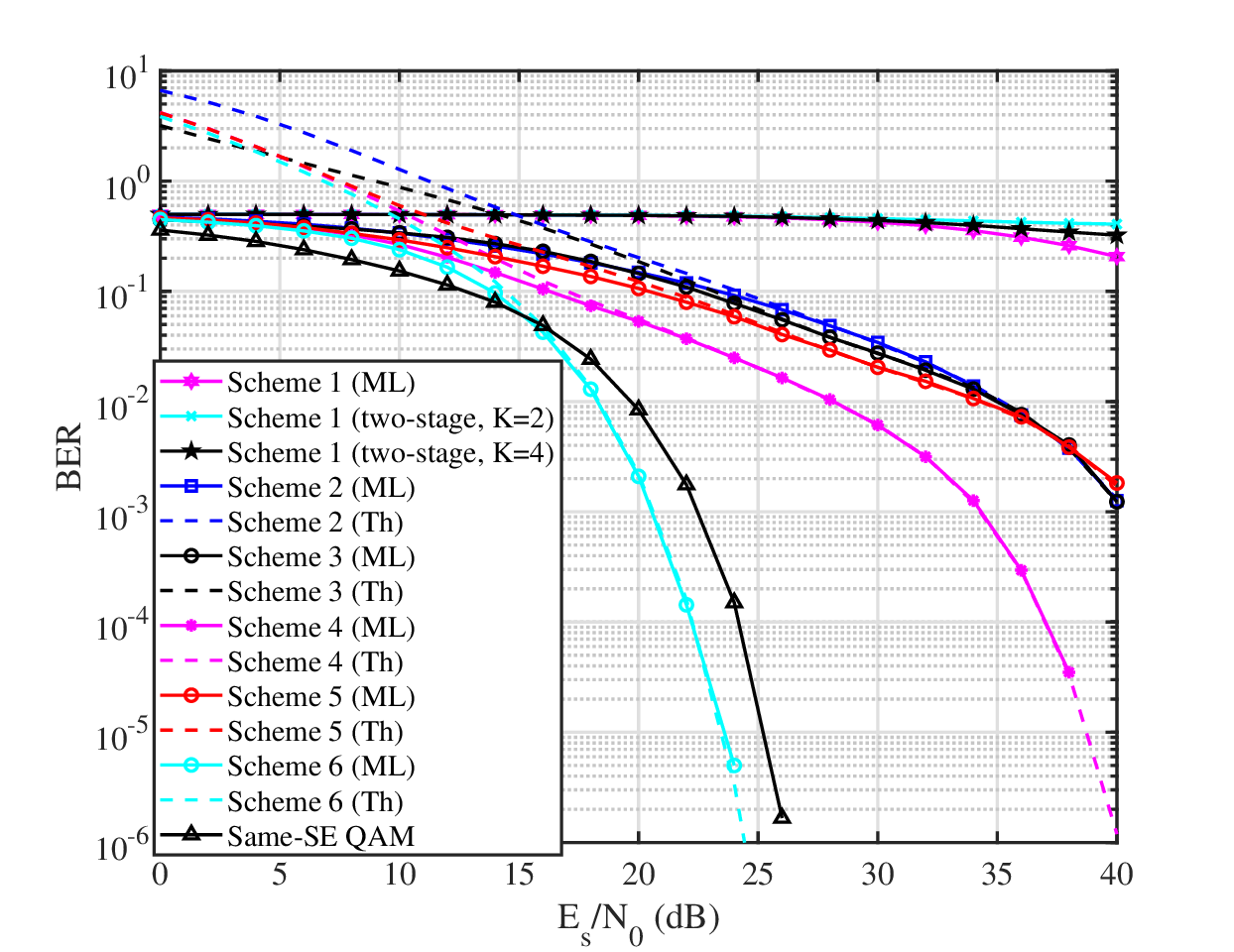}}	
		\caption{Simulation and analytical performance of the proposed MA-IM system when $L$=6, $N$=16, $\rho(\Delta r)= 0.9$ and 4QAM.}
		\label{fig6}	
\end{figure}

Fig.~\ref{fig6} reports the BER performance of the considered MA-IM schemes in a
six-path geometric multipath channel under QPSK modulation, together
with the corresponding analytical ABEP upper bounds. The spatial sampling
density is determined by the correlation constraint $\rho(\Delta r)=0.9$,
and the movable region is divided into $N=16$ index states. In the figure,
“ML’’ and “two-stage’’ denote Monte Carlo simulation results, while “Th’’
represents the analytical bound derived in~\eqref{eq:abep_union_bound}.
 The analytical bounds closely match the simulated BER in
the moderate-to-high $E_s/N_0$ regime, validating the accuracy of the
proposed analysis.

Several observations can be made. Scheme~1 exhibits poor performance due
to random in-cell activation, confirming its role as a baseline.
Schemes~2, 3, and 5 achieve comparable BER performance, indicating that
geometry-based anchoring, SNR-based selection, and unconstrained
channel-domain max--min design provide similar gains under the considered
setting.
In contrast, Scheme~4 achieves a noticeable improvement. This can be
explained by its cell-constrained max--min representative selection,
which jointly enforces spatial regularity and channel separation.
Compared with Scheme~5, which performs global max--min optimization
without geometric constraints, Scheme~4 avoids selecting overly clustered
or unevenly distributed anchors. As a result, it provides a more balanced
constellation structure and improves the effective separability among
index states.

Most importantly, Scheme~6 achieves a significant performance gain by
directly maximizing the minimum Euclidean distance of the joint
index–modulation constellation. As a result, it outperforms all other
schemes and approaches the same-SE QAM benchmark at high SNR. In addition,
the proposed two-stage detector achieves near-ML performance with much
lower complexity.
\begin{figure}[htp]
	\centering
		{\includegraphics[width=0.40\textwidth]{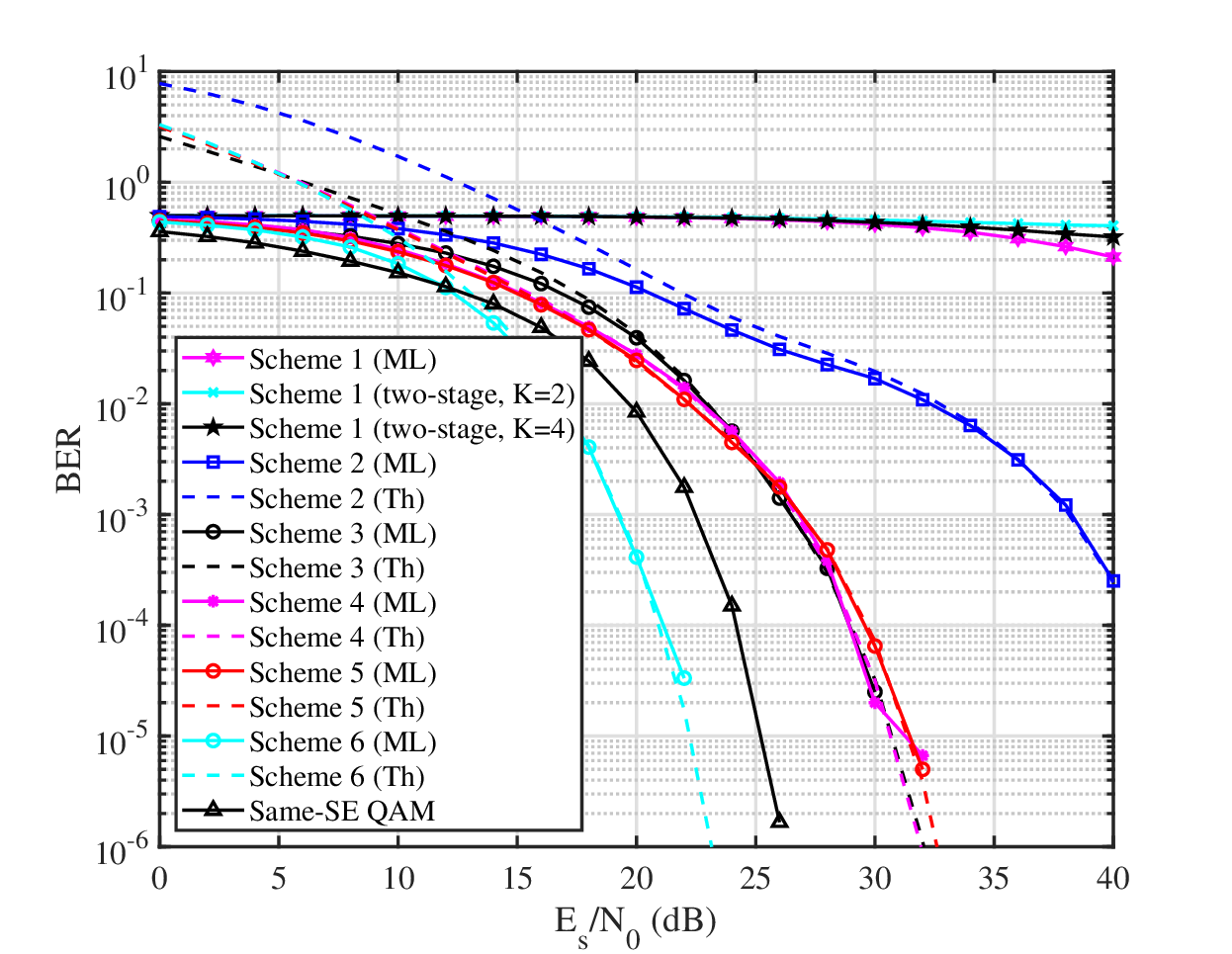}}	
		\caption{BER performance of the proposed MA-IM schemes  with \(L=12\), \(N=16\), $\rho(\Delta r)= 0.9$ and $M$=4.}
		\label{fig7}	
\end{figure}

When the number of propagation paths increases from $L=6$ in Fig.~\ref{fig6}
to $L=12$ in Fig.~\ref{fig7}, all MA-IM schemes exhibit a notable performance
improvement. This is due to the enhanced channel richness introduced by
additional multipath components, which reduces the similarity of channel responses across nearby positions
due to the superposition of multiple propagation paths with different
phases and directions.
As a result, the separability among index states is improved, leading to a
larger minimum distance in the joint index--modulation constellation and
thus lower BER. This gain is particularly pronounced for Scheme~3--Scheme~5,
which explicitly exploit channel-domain characteristics, while Scheme~2
achieves only moderate improvement due to its geometry-based design.

Scheme~6 continues to achieve the best performance, confirming that directly
optimizing the joint constellation distance is the most effective strategy,
especially in rich multipath environments. In contrast, the same-SE QAM
benchmark remains unchanged with $L$, highlighting that MA-IM uniquely
benefits from multipath by converting channel diversity into performance
gain. The proposed two-stage detector maintains near-ML performance across
all cases\footnote{Since Scheme~1 serves primarily as a baseline and consistently yields
poor performance, it is excluded from subsequent figures to better
highlight the relative gains among the proposed anchor optimization
strategies.}.

This result reveals that, unlike conventional modulation schemes,
MA-IM benefits from richer multipath environments by exploiting
the induced channel diversity as an additional information dimension.

\begin{figure}[htp]
	\centering
		\subfigure[$\rho(\Delta r)= 0.7$]{\includegraphics[width=0.4\textwidth]{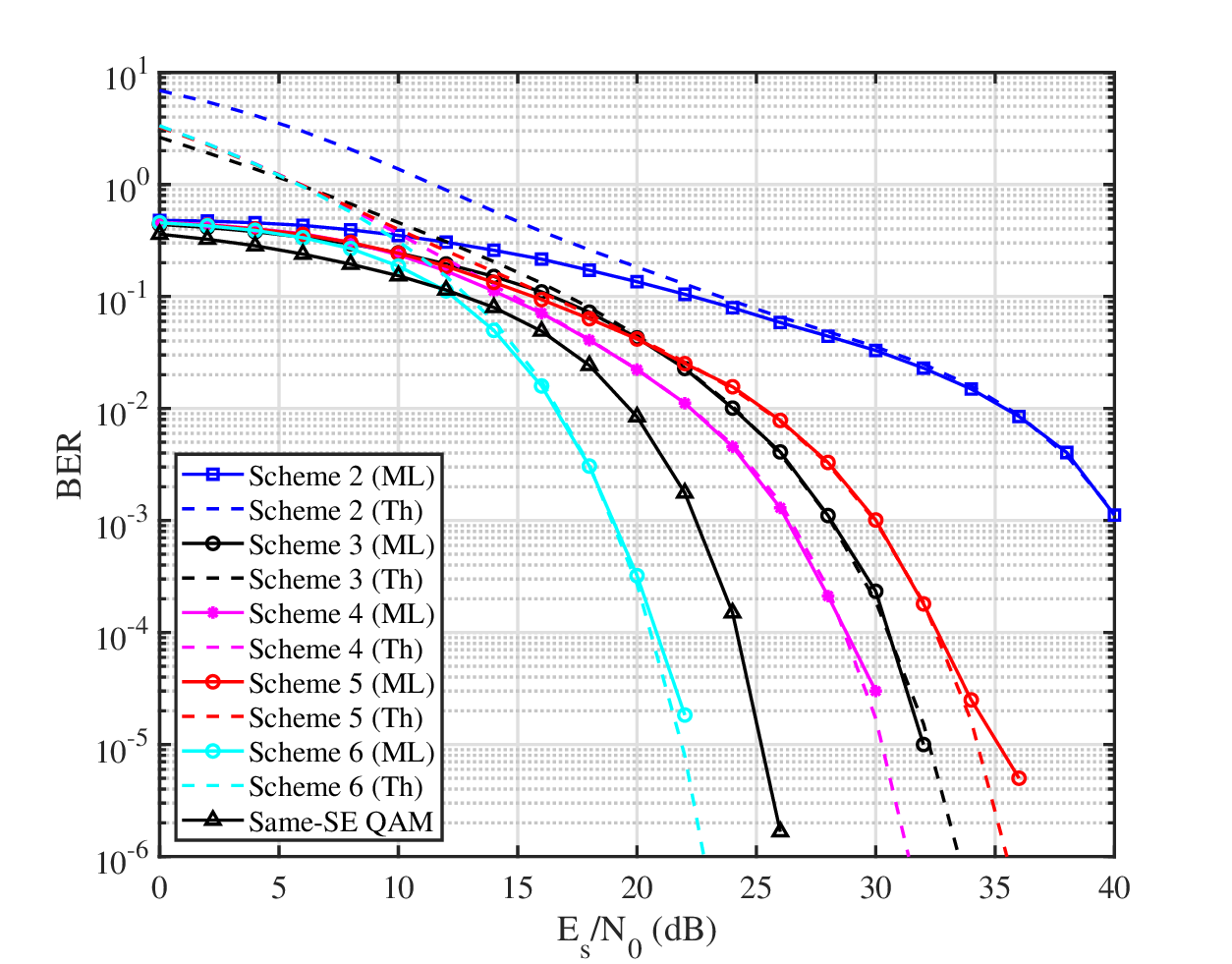}}	
        		\subfigure[$\rho(\Delta r)= 0.8$]{\includegraphics[width=0.4\textwidth]{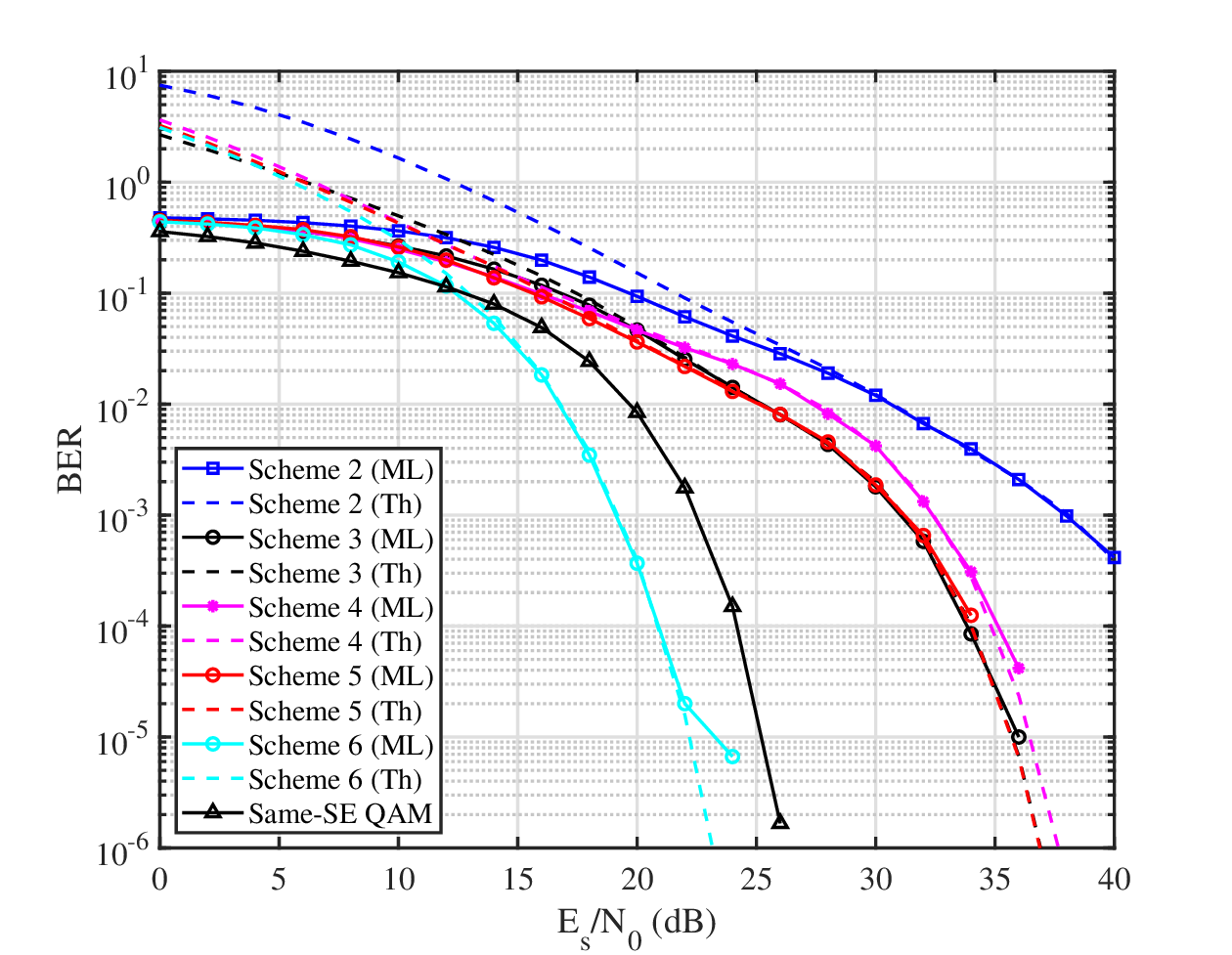}}	
		\caption{BER performance of the proposed MA-IM schemes  with \(L=12\), \(N=16\) and $M$=4.}
		\label{fig8}	
\end{figure}

Fig.~\ref{fig8} compares the BER performance of the proposed MA-IM
schemes under different spatial correlation levels, with
$\rho(\Delta r)=0.7$ and $\rho(\Delta r)=0.8$, respectively.
It can be observed that increasing $\rho(\Delta r)$ leads to a
noticeable performance degradation for all MA-IM schemes.
This is because a larger correlation coefficient implies higher
similarity among channel responses at different spatial positions,
which reduces the separability of index states and decreases the
minimum distance of the joint index--modulation constellation.

Despite this degradation, the relative performance ordering remains
consistent. Scheme~6 continues to achieve the best performance and
exhibits strong robustness against increased spatial correlation,
confirming the effectiveness of joint constellation optimization.
Schemes~3--5 experience more pronounced performance degradation as
$\rho(\Delta r)$ increases, since their designs explicitly rely on
channel-domain separability, which diminishes under stronger spatial
correlation. In particular, Scheme~4 is most sensitive due to its
dependence on max--min channel distance optimization.
In contrast, Scheme~2 exhibits more stable performance. As a
geometry-based scheme, it does not rely heavily on channel diversity,
and is therefore less affected by increased correlation. Consequently,
its relative performance improves compared to channel-aware schemes
in highly correlated environments.

In contrast, the same-SE QAM benchmark remains almost unchanged
under different $\rho(\Delta r)$ values, since it does not rely on
spatial degrees of freedom. This highlights a fundamental advantage
of MA-IM, namely its ability to leverage spatial diversity, while
also revealing that its performance is inherently coupled with the
spatial correlation structure of the channel. This observation
underscores the importance of anchor design under correlated
propagation conditions.

\begin{figure}[htp]
	\centering
				\includegraphics[width=0.4\textwidth]{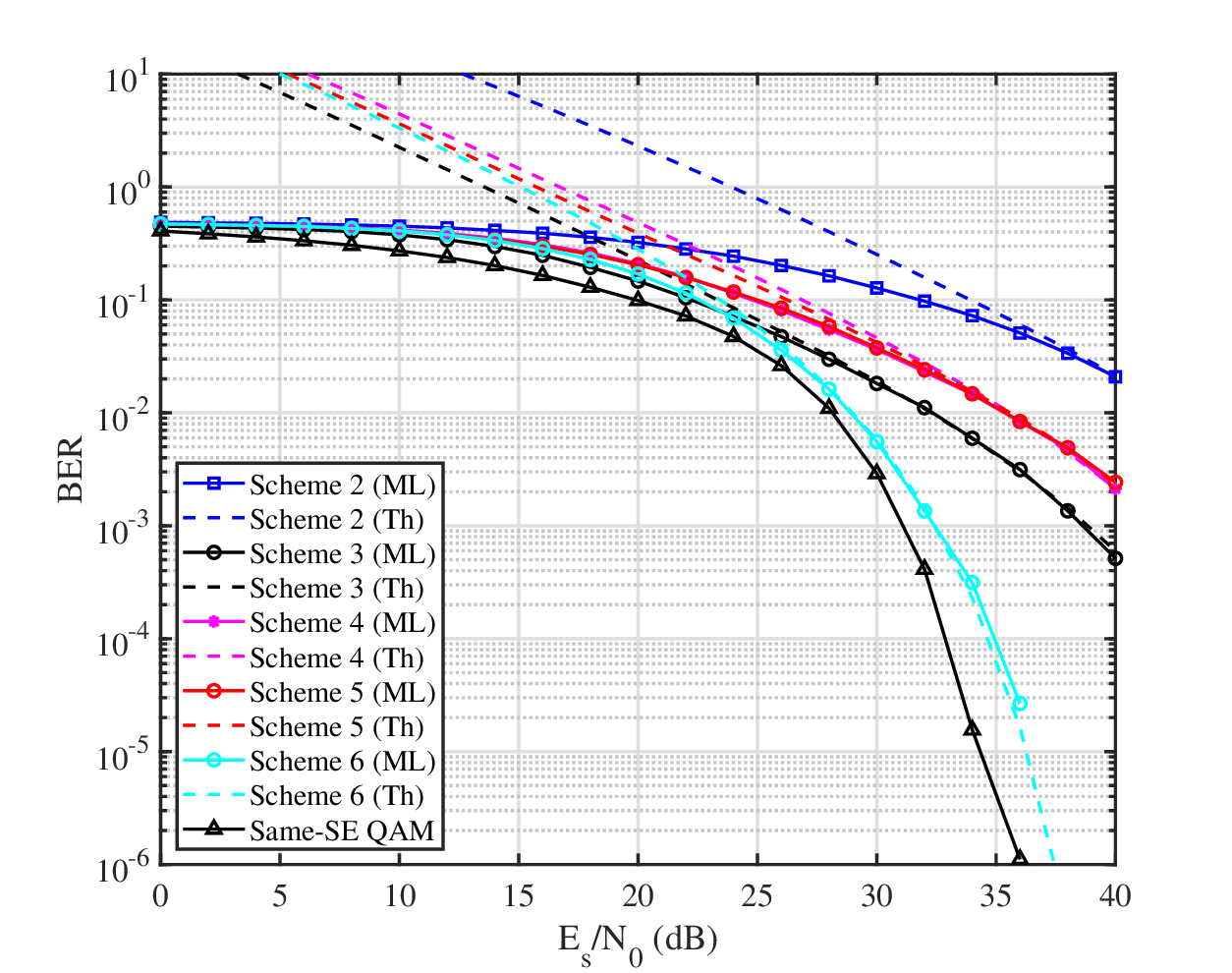}	
 				% \subfigure[$N$=64]{\includegraphics[width=0.48\textwidth]{results/BER_M4_N64_pho0.9_f1G.eps}}               
		\caption{BER performance of the proposed MA-IM schemes  with \(L=12\), $\rho(\Delta r)= 0.7$ and $M$=32, $N$=16.}
		\label{fig9}	
\end{figure}

Fig.~\ref{fig9} illustrates the BER performance of the proposed MA-IM
schemes under higher-order QAM modulation with $M=32$. Compared with
Fig.~\ref{fig8}(a), the BER performance of all schemes degrades as the
modulation order increases.
An interesting observation is that the same-SE QAM benchmark now
outperforms Scheme~6. This is because, although Scheme~6 directly
optimizes the Euclidean distances of the joint MA--QAM constellation,
its overall BER is still affected by both index-domain and symbol-domain
errors. In contrast, conventional QAM avoids index detection errors and
therefore becomes more competitive under high-order modulation.

Another notable observation is that Scheme~3 achieves significantly
better performance than Schemes~2, 4, and 5 in this setting. This
suggests that, when the modulation constellation becomes denser,
selecting anchors with stronger channel gains is more beneficial than
purely enlarging channel-domain separation. Meanwhile, the relatively
small gap between Schemes~4 and~5 indicates that max--min channel
distance alone is insufficient to guarantee the best BER performance
under high-order modulation.

Overall, these results reveal a modulation-order-dependent design
tradeoff: while joint constellation optimization is highly effective
under low-to-moderate modulation orders, SNR-oriented anchor selection
becomes increasingly attractive as the QAM order grows.

\begin{figure}[htp]
	\centering
\includegraphics[width=0.48\textwidth]{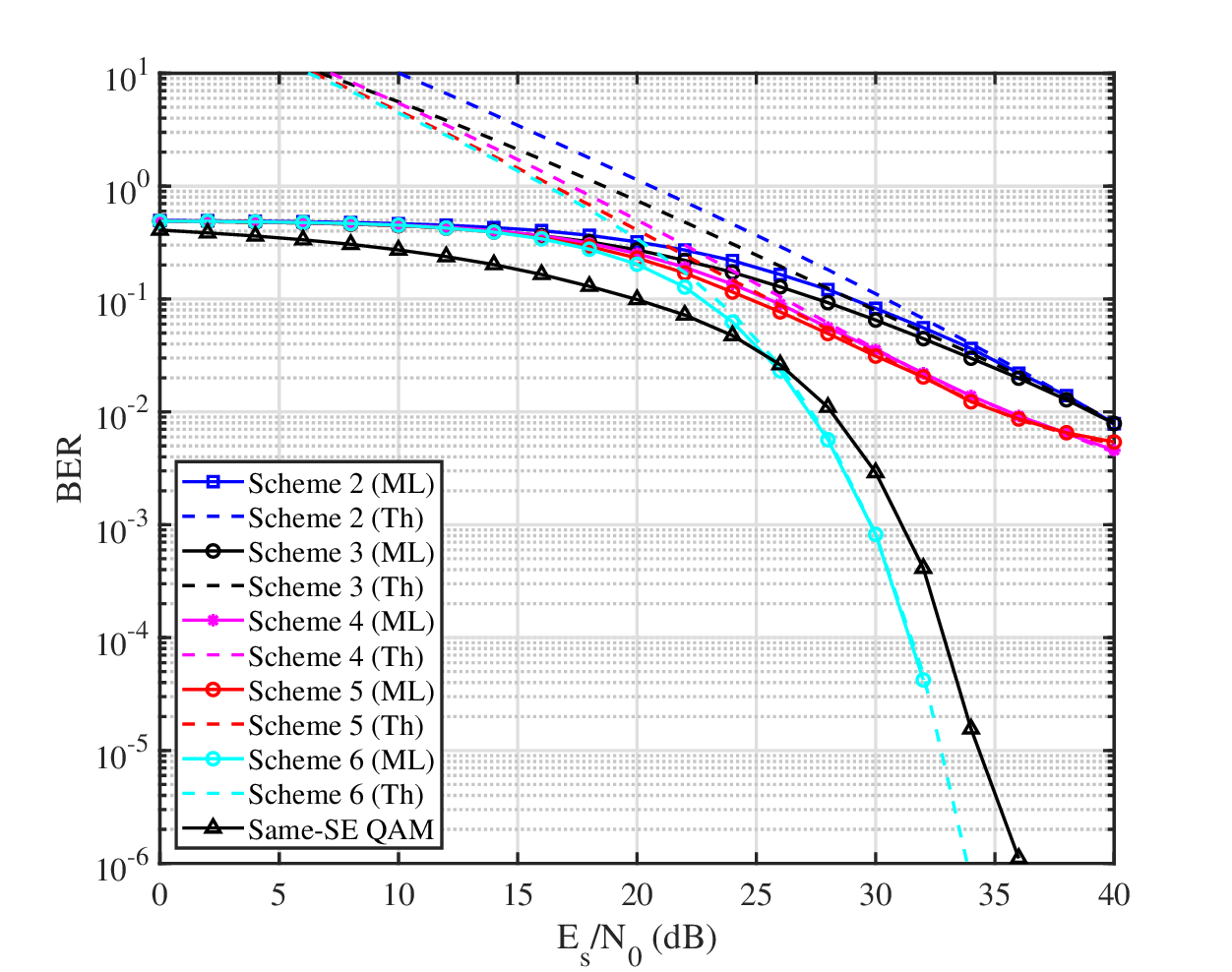}	              
		\caption{BER performance of the proposed MA-IM schemes  with \(L=12\), $\rho(\Delta r)= 0.7$ and $M$=4, $N$=128.}
		\label{fig10}	
\end{figure}

Fig.~\ref{fig10} shows the BER performance of the proposed MA-IM schemes
when the number of index states 128.
Compared with Fig.~\ref{fig8}(a), all schemes exhibit clear
performance degradation.
This behavior can be explained by the increased density of index states.
Although a larger $N$ improves the throughput, it also reduces the
minimum separation among candidate index states in the spatial/channel
domain, which in turn decreases the minimum distance of the resulting
joint constellation and degrades the BER performance. This further
confirms that not all sampling points should be directly indexed, and
that anchor optimization is necessary for reliable MA-IM transmission.

Another important observation from the comparison between
Fig.~\ref{fig9}--Fig.~\ref{fig10} is that increasing the QAM order and
increasing the number of MA index states affect MA-IM in fundamentally
different ways, even when the total spectral efficiency, i.e.,
$\log_2(M) + \log_2(N)$, is kept constant.
While a larger $M$ mainly densifies the symbol constellation, a larger
$N$ additionally reduces the separability among index states.
Therefore, enlarging $N$ imposes a more severe penalty on BER than
enlarging $M$.

Moreover, Scheme~6 remains the best-performing scheme even in this
dense-index regime, demonstrating the effectiveness of directly
optimizing the joint constellation distance. By contrast, Schemes~2
and~3 become the weakest anchor-based designs, indicating that simple
geometry-based or SNR-based selection is insufficient when the number of
index states is large. Schemes~4 and~5 achieve comparable performance,
suggesting that explicit distance-aware anchor design becomes more
important than purely local or gain-oriented rules as $N$ increases.

The simulation results demonstrate that the performance of the proposed
MA-IM framework is jointly governed by four key system parameters,
namely the number of propagation paths \(L\), the spatial sampling density
\(Q\), the number of index states \(N\), and the QAM modulation order \(M\).
Among them, \(N\) and \(M\) not only affect the BER performance but also
directly determine the SE of the system. This reveals
an inherent performance--rate tradeoff that must be carefully balanced
in practical implementations.

Among the considered schemes, Scheme~6 consistently achieves the best
BER performance across different configurations, owing to its direct
optimization of the minimum Euclidean distance of the joint
index--modulation constellation. Schemes~3--5 provide moderate performance
by partially exploiting channel-domain characteristics, whereas
Scheme~2, based on simple geometric partitioning, shows relatively limited
performance gains.

Nevertheless, it is worth noting that Scheme~2 remains closest to the
original movable antenna (MA) operation principle, as it preserves the
spatial continuity of antenna movement without enforcing a structured
anchor selection. This highlights a fundamental tradeoff between
implementation simplicity and performance optimization.

Overall, these results validate the effectiveness of the proposed
MA-IM framework and provide useful design guidelines for selecting
system parameters and anchor strategies in practical MA-enabled
wireless systems.

\section{Conclusion}
\label{sec6}

This paper investigated an MA-IM framework that exploits the spatial mobility of a single
reconfigurable antenna to create additional information-bearing
dimensions. By discretizing the continuous movable region into a dense
set of candidate sampling points and selecting representative anchors
for indexing, the proposed framework converts the spatial DoFs of the MA into a practical modulation resource.

Building upon this framework, we developed a family of anchor-selection strategies with different
levels of channel awareness, along with the corresponding ML and
low-complexity two-stage detectors, and derived unified ABEP upper
bounds based on the joint index--modulation constellation.

The results reveal several key insights. First, directly indexing all
sampling points is generally unreliable, highlighting the necessity of
anchor optimization. Second, MA-IM performance is governed by channel
richness, spatial correlation, the number of index states, and the
modulation order. In particular, increasing the number of propagation
paths improves performance via enhanced channel diversity, whereas
higher spatial correlation or excessively large index sets degrade BER
by reducing index-state separability. Moreover, increasing the QAM
order and the number of index states affect MA-IM differently, even
under the same transmission rate.

Among the considered schemes, the joint constellation-aware anchor
design consistently achieves the best performance, demonstrating that
optimizing channel-domain separation alone is insufficient and that
anchor design should align with the geometry of the resulting signal
constellation. With properly designed anchors, MA-IM can achieve
substantial performance gains and approach, or even outperform,
same-spectral-efficiency QAM baselines.
Overall, MA-IM provides a promising transmission paradigm whose key
advantage lies in optimizing representative ports from a dense movable
region before indexing.

Future work may extend the proposed framework to wideband channels,
MIMO systems, dynamic trajectory design, and prototype validation.
In addition, practical implementation aspects such as movement latency
and positioning errors of MA may affect system performance.
These effects can be mitigated through system-level designs, for example,
by employing multi-antenna architectures that enable parallel
transmission and repositioning.
A comprehensive investigation of such non-ideal effects, as well as
hardware constraints, remains an important direction for future work. 

\input{main_revised.bbl}

%\bibliographystyle{IEEEtran} 
%\bibliography{ref}

\end{document}

%% file: main_revised.bbl
% Generated by IEEEtran.bst, version: 1.14 (2015/08/26)